\newtheorem{theorem}{Theorem}
\newtheorem{lemma}{Lemma}
\newtheorem{corollary}{Corollary}
\newcommand{\mr}{moir\'e~}
\newcommand{\ourinv}{intra-valley inversion~}
\newcommand{\Blochk}{Bloch momentum~}
\newcommand{\Blochkk}{Bloch momentum}
\newcommand{\p}{\partial}
\newcommand{\beqn}{\begin{eqnarray}}
\newcommand{\eeqn}{\end{eqnarray}}
\newcommand{\mG}{\mathcal{G}}
\newcommand{\mH}{\mathcal{H}}
\newcommand{\mD}{\mathcal{D}}
\newcommand{\mC}{\mathcal{C}}
\newcommand{\mT}{\mathcal{T}}
\newcommand{\mM}{\mathcal{M}}
\newcommand{\mI}{\mathcal{I}}
\newcommand{\diag}{\text{diag}}
\newcommand{\ts}{\textsuperscript}
\begin{document}

\title{Chiral Approximation to Twisted Bilayer Graphene: Exact Intra-Valley Inversion Symmetry, Nodal Structure and Implications for Higher Magic Angles}

\author{Jie Wang}
\email{jiewang@flatironinstitute.org}
\affiliation{Center for Computational Quantum Physics, Flatiron Institute, 162 5th Avenue, New York, NY 10010, USA}
\author{Yunqin Zheng}
\affiliation{Institute for Solid State Physics, University of Tokyo, Kashiwa, Chiba 277-8581, Japan}
\affiliation{Kavli Institute for the Physics and Mathematics of the Universe,
University of Tokyo, Kashiwa, Chiba 277-8583, Japan}
\author{Andrew J. Millis}
\email{amillis@flatironinstitute.org}
\affiliation{Center for Computational Quantum Physics, Flatiron Institute, 162 5th Avenue, New York, NY 10010, USA}
\affiliation{Department of Physics, Columbia University, 538 W 120th Street, New York, New York 10027, USA}
\author{Jennifer Cano}
\email{jcano-affiliate@flatironinstitute.org}
\affiliation{Center for Computational Quantum Physics, Flatiron Institute, 162 5th Avenue, New York, NY 10010, USA}
\affiliation{Department of Physics and Astronomy, Stony Brook University, Stony Brook, New York 11974, USA}

\begin{abstract}
This paper presents a mathematical and numerical analysis of the flatband wavefunctions occurring in the chiral model of twisted bilayer graphene at the ``magic'' twist angles. We show that the chiral model possesses an exact intra-valley inversion symmetry. Writing the flatband wavefunction as a product of a lowest Landau level quantum Hall state and a spinor, we show that the components of the spinor are anti-quantum Hall wavefunctions related by the inversion symmetry operation introduced here. We then show numerically that as one moves from the lowest to higher magic angles, the spinor components of the wavefunction exhibit an increasing number of zeros, resembling the changes in the quantum Hall wavefunction as the Landau level index is increased. The wavefunction zeros are characterized by a chirality, with zeros of the same chirality clustering near the center of the moir\'e unit cell, while opposite chirality zeros are pushed to the boundaries of the unit cell. The enhanced phase winding at higher magic angles suggests an increased circulating current. Physical implications for scanning tunneling spectroscopy, orbital magnetization and interaction effects are discussed.
\end{abstract}

\maketitle
\section{Introduction}
When one graphene layer is stacked on top of another layer with small relative twist angle, a \mr super-lattice pattern is created. At particular twist angles, referred to by Bistritzer and MacDonald as ``magic angles" \cite{Bistritzer12233}, the bands near the chemical potential are dramatically flattened and separated from other bands \cite{Santos,Santos2,Mele_TBG,Bistritzer12233}. Experiments on ``magic angle'' bilayers report interesting phenomena including superconductivity, interaction-driven insulating states and anomalous Hall effects \cite{Cao:2018aa,Cao:2018ab,Serlin900,Sharpe605,Cory_Science19,Young_Naturephys19,Young_Naturephys20,Tang:2020aa,Balents:2020aa,Xu_PRL18,Wu_PRL18,Biao_PRL19,Rossi_FlatbandSC,FangXie_FlatbandSC,Wu_PRL20,Senthil_NearlyFlatBand,Di_PRL20,Li:2010aa,Trambly-de-Laissardiere:2010aa,PhysRevB.98.235158,XieMingPRL20,Stauber_Marginal_FL_inTBG,2020arXiv200810830S,2020arXiv200613963L,khalaf2020soft}.

There are eight flat bands arising from the combinations of degrees of freedom in the conduction bands of the component graphene layers \cite{Mele_TBG_Sym,Po_Symmetry_PRX,Zou_Symmetry_PRB,Oskar_Wannier,Liang_Wannier,Zhida_PRL19,Liang_CDW,Po_TBG_fragile,Carr_Wannier,Oscar_hiddensym,Oskar_PRL19,Zaletel_PRX20,Bernevig_tbg1,song2020tbg,bernevig2020tbg,lian2020tbg,bernevig2020tbg_5,xie2020tbg}. The states that comprise these bands may be labeled by a spin degree of freedom and two additional indices labeling the layer and sublattice of the component graphene sheets. Much of the novel physics of twisted bilayer graphene is believed to arise when the symmetries corresponding to these quantum numbers are spontaneously or explicitly broken. Interestingly, many of the broken symmetry states appear to have a topological character, revealed for example by anomalous Hall effects \cite{Serlin900,Sharpe605,Zaletel_TBG_AQH,Zaletel_PRX20,ZhaoLiu_TBG,Cecile_PRL20,Cecile_TBG_Flatband}, and at least at integer filling the topological character is believed to be inherent in the single-particle wavefunctions. An improved understanding of the single-particle wavefunctions is therefore important both for improved understandings of the observed and potentially observable topological phases and as a basis for theories of interaction effects in magic angle bilayer graphene.

Recently, Tarnopolsky, Kruchkov and Vishwanath \cite{Grisha_TBG} drew attention to a particular ``chiral''  model in which the interlayer tunneling Hamiltonian contains no terms in which an electron hops from one layer to the same sublattice on the other layer. They showed that in this case the eight weakly dispersing bands become exactly flat (dispersionless) at certain twist angles. They further constructed explicit expressions for the zero mode wavefunctions, and noticed that their solutions exhibited a holomorphic character reminiscent of the lowest Landau level quantum Hall physics \cite{Grisha_TBG,XiDai_PseudoLandaulevel}. This holomorphic character can give rise to a nontrivial topology of the flatbands, explaining the anomalous Hall effects.

In this paper, we study the zero mode wavefunctions of the chiral model \cite{Grisha_TBG} of twisted bilayer graphene in more detail. We identify an exact \ourinv symmetry of the chiral model and show how this symmetry implies that the flatband wavefunctions found by Tarpolsky, Kruchkov and Vishwanath can be written (up to a normalization factor) as:
\begin{equation}
\phi_{\bm k}(\bm r) = \left(\begin{matrix}i\mG(\bm r) \\\eta\mG(-\bm r)\end{matrix}\right)\times\Phi_{\bm k}(\bm r).\label{zeromodewavefunction}
\end{equation}
where $\Phi_{\bm k}$ is a quantum Hall wavefunction of the lowest Landau level, the function $\mG(\bm r)$ can be interpreted as a quantum Hall wavefunction in a magnetic field oppositely directed to that of $\Phi_{\bm k}$, and $\eta=\pm1$ is the inversion eigenvalue. The entire dependence on the crystal momentum $\bm k$ is carried by the quantum Hall wavefunction $\Phi_{\bm k}$, which exhibits one node at a $\bm k-$dependent position, while $\mG$, which is independent of $\bm k$, has a number of nodes that increases as the magic angle index increases, indicating a similarity between higher magic angles and higher Landau levels. This structure is revealed in FIG.~\ref{plotwfnormthreeangles}, which for the first three magic angles presents the norm of each component of $\phi_{\bm k}$ for the case where $\bm k$ is fixed at the \mr Dirac point $\bm K$ and implies a charge variation that can be detected by scanning tunneling spectroscopy.

We show that Eqn.~(\ref{zeromodewavefunction}) explains how the wavefunction $\phi_{\bm k}$ can simultaneously have the Abelian translation symmetry of the usual Bloch wavefunction and give rise to the anomalous Hall effect. Further, the quantum Hall anti-quantum Hall structure implies that the wavefunction nodes have a chirality and we find that nodes of a given chirality are concentrated in particular regions of the unit cell, implying intra-cell circulating currents that grow in magnitude as the magic angle increases. The increased density of nodes at higher magic angles will also affect the project of electron-electron interactions onto the flatbands in a manner similar to that occurring at higher Landau levels in the quantum Hall problem.

\begin{figure}[h]
    \centering
    \includegraphics[width=0.5\textwidth]{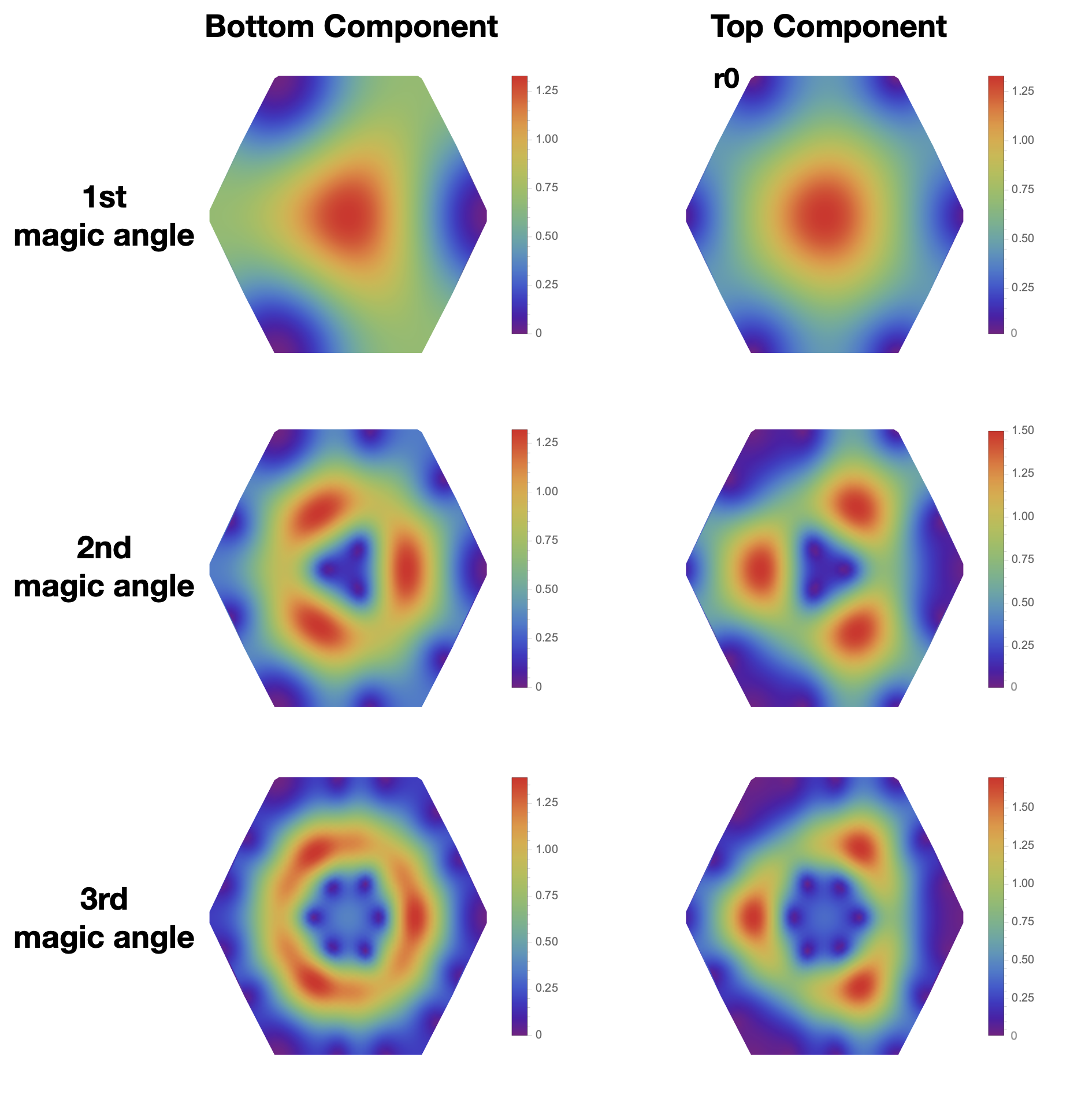}
    \caption{Norm of each component of the wavefunction Eqn.~(\ref{zeromodewavefunction}) at the \mr Dirac point $\bm K$, plotted at the first three magic angles (three rows). The upper left and lower right corners of the unit cells are the \emph{BA} ($\bm r_0$) and \emph{AB} ($-\bm r_0$) stacking points, as marked. The two columns correspond to the bottom and top components of the wavefunction. Each of them has clear symmetry and zero-structures. The wavefunctions at other \Blochk have a similar zero-structure, as explained in the text. The zeros are classified by their chirality {\it i.e.} whether the wavefunction's phase advances by $\pm 2\pi$ when encircling the zero once. Remarkably, the wavefunction associated with the $n_{th}$ magic angle has $3(n-1)$ zeros located at the unit cell center, all of which have the same chirality. We discuss the mathematical structure in Section~\ref{sec:spinor} and Section~\ref{sec:structure_zeros}, and implications for experimental observables in Section~\ref{Sec:ExperimentalObservation}.}\label{plotwfnormthreeangles}
\end{figure}

The paper is organized as follows. Section~\ref{sec:TBGreview} reviews the continuum model of twisted bilayer graphene and the chiral model defined from it, to establish the notation and approximations used here. Section~\ref{sec:symmetrysection} introduces our \ourinv symmetry and derives some of the properties that follow from it. Then in Section~\ref{sec:spinor}, we reexamine the derivation of the flatband wavefunctions and derive their spinor-structure. We then discuss the nodal structure of the flatband wavefunctions in Section~\ref{sec:structure_zeros}. In the last part of this work, Section~\ref{Sec:ExperimentalObservation}, we discuss how our findings can impact experimental observables. Section~\ref{sec:conclusion} is a summary and conclusion.

\section{Model Hamiltonians}\label{sec:TBGreview}
We start this section by reviewing the continuum model \cite{Santos,Santos2,Bistritzer12233} and the chiral model \cite{Grisha_TBG} of twisted bilayer graphene to establish the notation.

When two parallel graphene sheets (top, bottom) are stacked with any one of an infinite set of relative commensurate angles $\theta$, a \mr pattern forms, in which the combined system retains the basic hexagonal lattice structure of graphene, but with a much larger unit cell containing a number of carbon atoms $\sim \theta^{-2}$. The corresponding reciprocal space unit cell, which we refer to as the \mr Brillouin zone, is illustrated in FIG.~\ref{mBZ}.

As shown in FIG.~\ref{mBZ}, $\bm a_{i=1,2}$ indicate the two dimensional basis vectors of the \mr unit cell. The area of the \mr unit cell is $2\pi S$=$|\bm a_1\times\bm a_2|$. We denote the reciprocal lattice vectors as $\bm b^{i=1,2}$. Throughout this paper, we define the unit length by setting $\sqrt S$=$1$.

\begin{figure}[]
    \centering
    \includegraphics[width=0.35\textwidth]{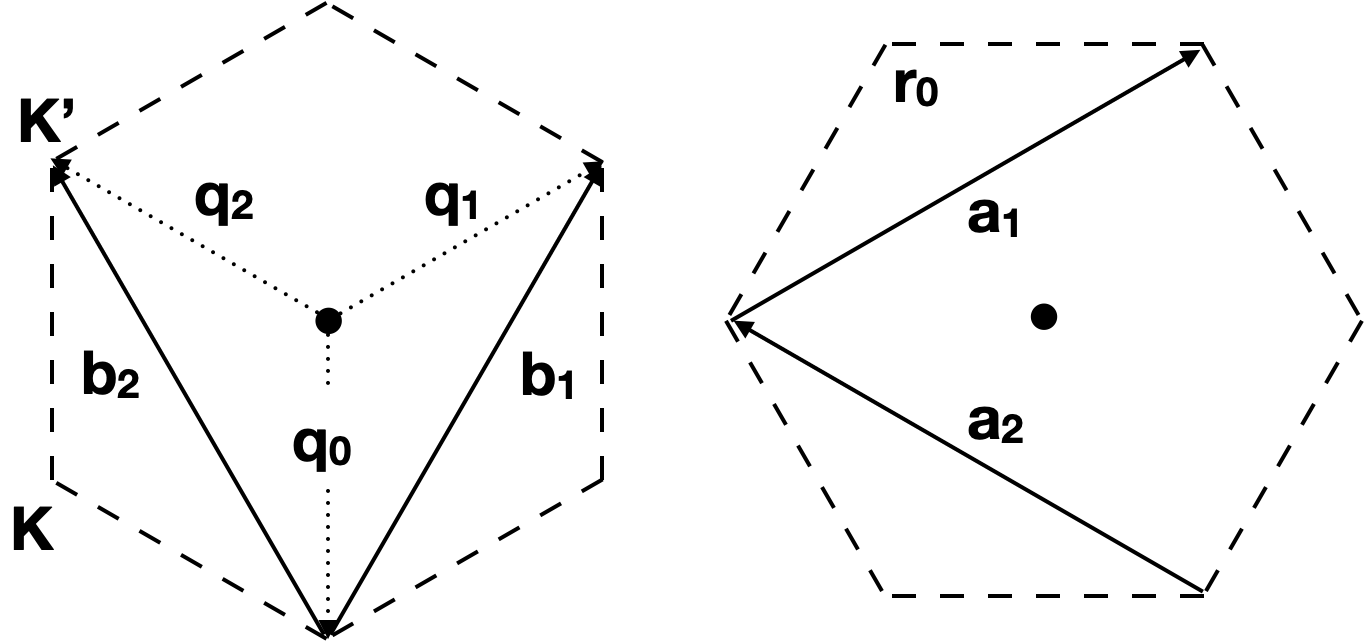}
    \caption{Left: \mr Brillouin zone. Right: real space \mr unit cell. Shown are the reciprocal lattice vectors $\bm b_{1,2}$, the \mr Dirac points $\bm K,\bm K'$, the real space lattice vectors $\bm a_{1,2}$ and the wavevectors $\bm q_{0,1,2}$, and the \emph{BA} stacking point $\bm r_0$. The \emph{AB} and \emph{AA} stacking points are located respectively at $-\bm r_0$ and the origin of the unit cell.}
    \label{mBZ}
\end{figure}

The fundamental single-particle Hamiltonian for twisted bilayer graphene consists of a standard single-layer graphene Hamiltonian for the top/bottom layer, $h_{G}(\bm r,\bm r^\prime)$, and an interlayer coupling $T(\bm r,\bm r^\prime)$ whose periodicity defines the \mr superlattice. Schematically the Hamiltonian operator is
 \begin{equation}
H_{TBLG}(\bm r,\bm r^\prime)=\left(\begin{array}{cc}h_G^b(\bm r,\bm r^\prime)& T(\bm r,\bm r^\prime) \\T^\dagger (\bm r,\bm r^\prime)& h_G^t(\bm r,\bm r^\prime)\end{array}\right).\label{HTBLG}
\end{equation}
where $t/b$ stands for the top/bottom graphene sheet.

It is generally agreed that, as proposed by Bistritzer and MacDonald \cite{Bistritzer12233}, the low energy properties of twisted bilayer graphene can be adequately described by a model with three key features. The first is a continuum description of the physics in each graphene sheet, obtained by linearizing the graphene Hamiltonian $h_G$ near the Dirac points (we denote the linearized Dirac Hamiltonian as $h_D$). The second is that the interlayer hopping only couples states near one Dirac point in one layer with states near the same graphene Dirac point in the other layer. This means that the relevant Hamiltonian is the product of two copies, one for each valley. A third simplification proposed by Bistritzer and MacDonald is that the interlayer hopping, in principle a function of $\bm r$ in one layer and $\bm r^\prime$ in the other becomes a function only of $\bm r$ with $\bm r^\prime$=$\bm r$. This is a coarse-graining approximation based on the notion that $T(\bm r)$ has a range of the order of the carbon-carbon distance so if the wavefunctions vary slowly on this scale we can ignore the detailed local structure. 

Following Bistritzer and MacDonald \cite{Bistritzer12233}, the effective continuum Hamiltonian of a single valley is,
\begin{equation}
H_{BM}=\int d^2\bm r \Psi_{BM}^\dagger(\bm r)\left(\begin{array}{cc}h_D^{b}\left(\frac{\theta}{2}\right)& T(\bm r) \\T^\dagger (\bm r)& h_D^{t}\left(-\frac{\theta}{2}\right)\end{array}\right)\Psi_{BM}(\bm r).\label{HTBG}
\end{equation}

A related Hamiltonian can be found for the opposite valley by acting with time reversal symmetry. The wavefunction $\Psi_{BM}(\bm r)$ is a four-component spinor, with the lower two components the two sublattices of the top layer, and the upper two the two sublattices of the bottom layer:
\begin{equation}
\Psi_{BM}(\bm r)=\left(\begin{array}{c}\psi^A_b(\bm r) \\\psi^B_b(\bm r) \\\psi^A_t(\bm r) \\\psi^B_t(\bm r)\end{array}\right).\label{macdonaldbasis}
\end{equation}

We have suppressed the spin index because the global $SU(2)$ spin invariance implies that the single-particle Hamiltonian is spin-diagonal. The continuum approximation to the Dirac Hamiltonian of a layer $\lambda=t,b$ is:
\begin{equation}
h_D^{\lambda}\left(\frac{\theta}{2}\right)=v_0\left(-i{\bm \nabla}-{\bm K}_+^{\lambda}\right)\cdot e^{-\frac{i\theta}{4}\sigma_z}\bm{\sigma}e^{\frac{i\theta}{4}\sigma_z}.\label{hktheta}
\end{equation}
where $\bm K_+^{t/b}$ is the graphene Dirac point $\bm K_+$ rotated by $\pm\theta/2$. As shown in FIG.~\ref{mBZ}, we define the \mr Dirac points as $\bm K$ = $\bm K_+^b-\bm K_+^{\Gamma}$, $\bm K'$ = $\bm K_+^t-\bm K_+^{\Gamma}$ where $\bm K_+^{\Gamma}$ is the \mr Gamma point labeled in graphene's reciprocal lattice coordinates. The interlayer tunneling potential $T(\bm{r})$ is constrained by the symmetries of a single valley: $\mC_3$, $\mM_y$ and $\mC_2\mT$, as discussed in Section~\ref{Sec:Symmetry}. In the Bistritzer-MacDonald model, the interlayer hopping is 
\begin{equation}
T(\bm r)=\sum_{j=0}^2T_je^{-i(\bm q_0-\bm q_j)\cdot\bm r}.\label{T}
\end{equation}
with $\phi$=$2\pi/3$, the $T_j$ is:
\begin{equation}
T_j= \omega_0 - \omega_1\cos(j\phi)\sigma_x + \omega_1\sin(j\phi)\sigma_y.\label{Tjdef}
\end{equation}

The \emph{chiral model} \cite{Grisha_TBG} is obtained by setting $\omega_0=0$ in Eqn.~(\ref{Tjdef}). The chiral model for a single valley is written in a different basis as $H_{BM}$ in Eqn.~(\ref{HTBG}):
\begin{equation}
H_{cBM}=\int d^2\bm r \Psi_c^\dagger (\bm r)\left(\begin{array}{cc}0& \mD(\bm r) \\ \mD^\dagger (\bm r)&0\\\end{array}\right)\Psi_c(\bm r).\label{chiral-form}
\end{equation}
where $\Psi_c(\bm r)$ is a four-component spinor whose upper two components ($\phi$) correspond to the $A$ sublattice of the bottom and top layer, and the lower two components ($\chi$) the $B$ sublattice of the bottom and top layer:
\begin{equation}
\Psi_c(\bm r)=\left(\begin{array}{c}\phi_b(\bm r) \\\phi_t(\bm r) \\\chi_b(\bm r) \\\chi_t(\bm r)\end{array}\right),\label{chiralbasis}
\end{equation}
where we have suppressed the \Blochk $\bm k$. The unitary transformation between the non-chiral basis Eqn.~(\ref{macdonaldbasis}) and the chiral basis Eqn.~(\ref{chiralbasis}) is:
\begin{equation}
\Psi_{c,\bm k}(\bm r) = e^{-i(\bm K_+^\Gamma+\tau_z\bm K)\cdot\bm r}e^{i\frac{\theta}{4}\tau_z\sigma_z}\Psi_{BM,\bm K_+^\Gamma+\bm k}(\bm r),\label{unitarytrans}
\end{equation}
where in Eqn.~(\ref{unitarytrans}), we have used $\bm\sigma$ and $\bm\tau$ for Pauli matrices acting on the sublattice and layer degrees of freedom respectively:
\beqn
\bm\sigma:~\text{sublattices};\quad\bm\tau:~\text{layers}.\nonumber
\eeqn

In Eqn.~(\ref{unitarytrans}), we have also shifted the center of the \Blochk of the chiral basis to the \mr Gamma point. The Bloch boundary condition of the chiral basis is:
\beqn
\Psi_{c,\bm k}(\bm r+\bm a) = e^{i(\bm k-\tau_z\bm K)\cdot\bm a}\Psi_{c,\bm k}(\bm r).\label{bc_chiralbasis}
\eeqn
where the details of Eqn.~(\ref{unitarytrans}) and Eqn.~(\ref{bc_chiralbasis}) can be found in Appendix~\ref{Utransformations}.

The operators $\mD^{\dag}(\bm r)$ and $\mD(\bm r)$ in Eqn.~(\ref{chiral-form}) are:
\beqn
\mD^{\dag}(\bm r) &=& \sqrt{2}\left(\begin{matrix}-i\bar\partial & \alpha U_{\phi}(\bm r)\\\alpha U_{\phi}(-\bm r) & -i\bar\partial\end{matrix}\right),\nonumber\\
\mD(\bm r) &=& \sqrt{2}\left(\begin{matrix}-i\partial & \alpha U_{-\phi}(\bm r) \\\alpha U_{-\phi}(-\bm r) & -i\partial\end{matrix}\right).\label{AshvinHamiltonian}
\eeqn
where $U_{\phi}(\bm r)$ is:
\begin{equation}
U_{\phi}(\bm r) = e^{-i\bm q_0\cdot\bm r} + e^{i\phi}e^{-i\bm q_1\cdot\bm r} + e^{-i\phi}e^{-i\bm q_2\cdot\bm r}.\label{Uphi}
\end{equation}

As usual, we have defined $z=(x+iy)/\sqrt2$, $\partial=(\partial_x-i\partial_y)/\sqrt2$. The parameter $\alpha$ is determined by the twisted angle: $\alpha=(3w_1a_0)/(8\sqrt2\pi v_0\sin\frac{\theta}{2})$ where $v_0$ is the graphene's Fermi velocity and $a_0$ is the graphene's lattice constant. The vectors $\bm q_{0,1,2}$ are specified in FIG.~\ref{mBZ}.

The chiral Hamiltonian anti-commutes with the chiral matrix $\sigma_z$. As a consequence, the single-particle spectrum is particle-hole symmetric. In the next section, we review symmetries of twisted bilayer graphene, and introduce the \ourinv symmetry.

\section{Intra-Valley Inversion Symmetry}\label{sec:symmetrysection}
In this section, we start by discussing the symmetries of twisted bilayer graphene with an emphasis on how $\mC_2\mT$ symmetry constrains the tunneling terms. In Section~\ref{sec:exact_inversion_sym}, we introduce the exact \ourinv symmetry of the chiral model, and derive some properties that follow from it.

\subsection{Symmetry constraint on tunneling terms}\label{Sec:Symmetry}
The symmetries of twisted bilayer graphene play crucial roles in determining the single and many particle properties \cite{Mele_TBG_Sym,Po_Symmetry_PRX,Zou_Symmetry_PRB,Oskar_Wannier,Liang_Wannier,Zhida_PRL19,Liang_CDW,Po_TBG_fragile,Carr_Wannier,Oscar_hiddensym,Oskar_PRL19,Zaletel_PRX20,Bernevig_tbg1,song2020tbg,bernevig2020tbg,lian2020tbg,bernevig2020tbg_5,xie2020tbg}. In this section, we review these symmetries, with an emphasis on how symmetries constrain the low energy continuum Hamiltonian.

The ``crystal symmetries'' of twisted bilayer graphene are generated by the \mr translation symmetry, $\mC_6$ rotational rotation, and mirror symmetry $\mM_y$. Time reversal symmetry, $\mT$, is also present. In addition, in the continuum model the charge conservation of each valley, {\it i.e.} $U(1)$ valley symmetry, is assumed. The symmetries that keep each valley invariant ($\mC_2\mT$, $\mC_3$ and $\mM_y$) constrain the single valley Hamiltonian in Eqn.~(\ref{HTBG}) and Eqn.~(\ref{chiral-form}). Here, the important constraint for us is that $\mC_2\mT$ symmetry requires the tunneling term in Eqn.~(\ref{HTBG}) satisfy (proof in Appendix~\ref{sec:append_sym}):
\beqn
T(\bm r) = \sigma_x T^*(-\bm r)\sigma_x,\label{C2T_T}
\eeqn
where, as in the previous section, $\sigma_x$ acts on sublattice space. In the chiral basis, this means that the off-diagonal elements of $\mD$ (and $\mD^\dagger$) are related by $\bm r \leftrightarrow -\bm r$, as we shown in Eqn.~(\ref{AshvinHamiltonian}).

\subsection{Exact \ourinv symmetry of the chiral model}\label{sec:exact_inversion_sym}
Here we show that the chiral model enjoys an exact \ourinv symmetry as constrained by $\mC_2\mT$ symmetry and the linearized Dirac fermion. We then discuss properties that follow from it, including the symmetries of the spectrum and single-particle states. In the end, we show a numerically observed alternating pattern of magic angle inversion parities.

\begin{lemma}
\label{lemmaDsigmay}
The zero-mode operator satisfies,
\begin{equation}
\tau_y\mD^{\dag}(\bm r)\tau_y = -\mD^{\dag}(-\bm r).\label{involution}
\end{equation}
\end{lemma}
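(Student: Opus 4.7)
The plan is to prove the lemma by direct matrix computation, using the explicit form of $\mD^\dag(\bm r)$ in Eqn.~(\ref{AshvinHamiltonian}) and the two-dimensional algebra of $\tau_y$ acting on the layer space. Since the statement is an algebraic identity between $2\times 2$ matrices of differential operators, no analytic input beyond the chain rule is needed; the content is really just the combined $C_2\mT$/chiral structure already displayed in Eqn.~(\ref{AshvinHamiltonian}).

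First I would record the action of $\tau_y$--conjugation on an arbitrary $2\times 2$ matrix in layer space. A short computation gives
\begin{equation}
\tau_y\begin{pmatrix} a & b\\ c & d\end{pmatrix}\tau_y = \begin{pmatrix} d & -c\\ -b & a\end{pmatrix},
\end{equation}
i.e.\ the diagonal entries are swapped and the off-diagonal entries are swapped and negated. Applying this to $\mD^\dag(\bm r)$, whose diagonal entries both equal $-i\bar\p$, one reads off
\begin{equation}
\tau_y\mD^\dag(\bm r)\tau_y = \sqrt{2}\begin{pmatrix} -i\bar\p & -\alpha U_\phi(-\bm r)\\ -\alpha U_\phi(\bm r) & -i\bar\p\end{pmatrix}.
\end{equation}

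Next I would compute $\mD^\dag(-\bm r)$ by the substitution $\bm r\mapsto -\bm r$ in the operator expression. The two tunneling terms exchange, $U_\phi(\bm r)\leftrightarrow U_\phi(-\bm r)$, which is the manifestation in the chiral basis of the $\mC_2\mT$ constraint Eqn.~(\ref{C2T_T}); meanwhile the holomorphic derivative, by the chain rule, flips sign, $\bar\p\mapsto-\bar\p$. Therefore
\begin{equation}
-\mD^\dag(-\bm r) = \sqrt{2}\begin{pmatrix} -i\bar\p & -\alpha U_\phi(-\bm r)\\ -\alpha U_\phi(\bm r) & -i\bar\p\end{pmatrix},
\end{equation}
which matches the previous expression and establishes Eqn.~(\ref{involution}).

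There is no real obstacle; the only point requiring care is the convention for what $\mD^\dag(-\bm r)$ means as a differential operator, namely that the substitution $\bm r\mapsto-\bm r$ is carried out consistently in both the tunneling potentials and in the derivative operator (via the chain rule). Once that convention is fixed, the lemma is the simultaneous statement that the off-diagonal sector is $\mC_2\mT$--symmetric (odd under $\bm r\to-\bm r$ combined with $\tau_y$--conjugation, which swaps the layers up to a sign) and that the diagonal Dirac kinetic term is itself odd under inversion. This clean separation is why the $\tau_y$--conjugation and coordinate inversion together act as an involution on $\mD^\dag$ and provides the algebraic seed for the \ourinv symmetry developed in the remainder of Section~\ref{sec:exact_inversion_sym}.
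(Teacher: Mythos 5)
Your proof is correct and takes essentially the same route as the paper: a direct computation using the explicit form of $\mD^\dag$ in Eqn.~(\ref{AshvinHamiltonian}), where the key inputs are that the off-diagonal tunneling entries are exchanged under $\bm r\to-\bm r$ (the $\mC_2\mT$ constraint of Eqn.~(\ref{C2T_T})) and that the linearized derivative term is odd under inversion, exactly as the paper argues in the text below the Lemma and verifies in Appendix~\ref{sec:append_sym}. (Only a cosmetic slip: $\bar\p$ is the anti-holomorphic derivative, but both $\p$ and $\bar\p$ flip sign under $\bm r\to-\bm r$, so nothing changes.)
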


The calculation follows from the $\mC_2\mT$ constraint in Eqn.~(\ref{C2T_T}) and the definition of $\mD^{\dag}$ in Eqn.~(\ref{AshvinHamiltonian}). For Lemma~\ref{lemmaDsigmay} to hold, we need the off-diagonal elements of the $\mD(\bm r)$ operator to be related by $\bm r\leftrightarrow-\bm r$. In other words, in the chiral basis (Eqn.~(\ref{chiralbasis})), the interlayer tunneling potential from top to bottom layer is identical to that from bottom to top layer with spatial inversion. As we discussed in Section~\ref{Sec:Symmetry}, this is guaranteed by the $\mC_2\mT$ symmetry.

We now define the \ourinv symmetry.

\begin{theorem}\label{theoreminv}
The chiral model of twisted bilayer graphene has an exact \ourinv symmetry, whose operator is,
\begin{equation}
\mI \equiv \sigma_z\tau_y.
\end{equation}
such that,
\begin{equation}
\mI\mH(\bm r)\mI^\dag = \mH(-\bm r).\label{Hpseudoinvsym}
\end{equation}
Again, $\bm\sigma$ and $\bm\tau$ are Pauli matrices acting on the sublattice and layer degrees of freedom respectively.
\begin{proof}
It is straightforward to prove by using Lemma~\ref{lemmaDsigmay}:
\beqn
\mI\mH(\bm r)\mI^{\dag} &=& \left(\begin{matrix}\tau_y&0\\0&-\tau_y\end{matrix}\right)\left(\begin{matrix}0&\mD(\bm r)\\\mD^{\dag}(\bm r)&0\end{matrix}\right)\left(\begin{matrix}\tau_y&0\\0&-\tau_y\end{matrix}\right),\nonumber\\
&=& -\left(\begin{matrix}0&\tau_y\mD(\bm r)\tau_y\\\tau_y\mD^{\dag}(\bm r)\tau_y&0\end{matrix}\right) = \mH(-\bm r).\label{Hpseudoinv}
\eeqn
\end{proof}
\end{theorem}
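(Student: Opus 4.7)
The plan is to exploit the block structure of $\mH(\bm r)$ in Eqn.~(\ref{chiral-form}) and verify the identity by brute-force conjugation. Since $\mI=\sigma_z\tau_y$ factorizes as a tensor product, with $\sigma_z$ acting on the outer $2\times 2$ sublattice block structure of $\mH$ and $\tau_y$ acting on the inner layer structure of the $\mD$ and $\mD^\dag$ blocks, the two factors can be treated independently.

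First I would observe that $\sigma_z\,\mH(\bm r)\,\sigma_z=-\mH(\bm r)$, since $\mH$ is purely off-diagonal in the sublattice block basis while $\sigma_z$ carries opposite signs on the two sublattice sectors. Next I would push the two $\tau_y$ factors through the blocks; this sandwiches $\mD(\bm r)\mapsto\tau_y\mD(\bm r)\tau_y$ and $\mD^\dag(\bm r)\mapsto\tau_y\mD^\dag(\bm r)\tau_y$. Combining with the overall minus sign from the previous step, the conjugated Hamiltonian has off-diagonal blocks $-\tau_y\mD(\bm r)\tau_y$ and $-\tau_y\mD^\dag(\bm r)\tau_y$.

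At this point I would invoke Lemma~\ref{lemmaDsigmay} to rewrite the lower-left block as $\mD^\dag(-\bm r)$. The analogous identity $\tau_y\mD(\bm r)\tau_y=-\mD(-\bm r)$ needed for the upper-right block follows immediately by Hermitian conjugation of Lemma~\ref{lemmaDsigmay}, using that $\tau_y$ is Hermitian. Reassembling the blocks reproduces $\mH(-\bm r)$ and completes the proof. I do not anticipate any real obstacle: the calculation is mechanical, and the only substantive input, the sign relating $\mD^\dag(\bm r)$ to $\mD^\dag(-\bm r)$ under $\tau_y$ conjugation, has already been isolated in Lemma~\ref{lemmaDsigmay}, which in turn rests on the $\mC_2\mT$ constraint of Eqn.~(\ref{C2T_T}) together with the fact that the diagonal Dirac kernels $-i\partial$ and $-i\bar\partial$ flip sign under $\bm r\mapsto-\bm r$.
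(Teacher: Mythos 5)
Your proposal is correct and follows essentially the same route as the paper: conjugation by $\mI=\sigma_z\tau_y$ produces an overall minus sign from the off-diagonal sublattice structure together with a $\tau_y$ conjugation of each block, which Lemma~\ref{lemmaDsigmay} converts into $\bm r\to-\bm r$. Your explicit remark that the companion identity $\tau_y\mD(\bm r)\tau_y=-\mD(-\bm r)$ follows by Hermitian conjugation of the lemma is a small but worthwhile completion of a step the paper leaves implicit (it is stated separately in Eqn.~(\ref{eq:sigmayrelation}) of the appendix).
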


We call Eqn.~(\ref{Hpseudoinvsym}) the \ourinv symmetry in order to distinguish it from the crystalline 2D inversion symmetry, $\mC_2$. Since the $\mC_2$ symmetry mixes valleys of twisted bilayer graphene, it is not a symmetry of the single-valley continuum models in Eqn.~(\ref{HTBG}) and Eqn.~(\ref{chiral-form}). In contrast, the \ourinv maps $\bm k$ to $-\bm k$ \emph{within} the \mr Brillouin zone and thus does not mix valleys. As shown in Eqn.~(\ref{Hpseudoinvsym}), the \ourinv symmetry is an exact symmetry for the single valley chiral model Eqn.~(\ref{chiral-form}).

We emphasize that the \ourinv symmetry requires \emph{no} extra assumptions beyond the chiral model in Eqn.~(\ref{chiral-form}). The only requirement is the crystal symmetry $\mC_2\mT$ and the linearized Dirac fermion, which are already present in the chiral model in Eqn.~(\ref{chiral-form}).

The $\tau_y$ operator of Eqn.~(\ref{involution}) has appeared in recent literature. In Ref.~(\onlinecite{becker2020mathematics}) it is referred to as the \emph{involution operator}. It also appeared as Eqn.~(S15) in the supplementary material of Ref.~(\onlinecite{Zaletel_PRX20}). In Ref.~(\onlinecite{Zhida_PRL19}) and a very recent paper Ref.~(\onlinecite{song2020tbg}), by the same authors, a similar operator $i\tau_y$ plus $\bm r\leftrightarrow-\bm r$ is termed the \emph{unitary particle-hole} operator. This is different than our \ourinv symmetry: our $\tau_y$ operates on the chiral basis in Eqn.~(\ref{chiralbasis}), while the ``unitary particle-hole'' acts on the non-chiral basis in Eqn.~(\ref{macdonaldbasis}). Since the unitary transformation between these two bases in Eqn.~(\ref{unitarytrans}) does not commute with $\tau_y$, these two symmetries are distinct. It is also important to emphasize that our \ourinv is an \emph{exact symmetry of the chiral model}, while the unitary particle-hole symmetry is an approximate symmetry for both the continuum model in Eqn.~(\ref{HTBG}) and the chiral model in Eqn.~(\ref{chiral-form}), according to Ref.~(\onlinecite{Zhida_PRL19}) and Ref.~(\onlinecite{song2020tbg}).

Many interesting facts follow from the \ourinv symmetry, as we describe here and in the next section.

\begin{corollary}
\label{inversion}
At all twist angles, the single particle spectrum of the chiral model is not only particle-hole symmetric, but also inversion symmetric.
\end{corollary}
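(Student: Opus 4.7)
The plan is to combine two independent symmetries of the Hamiltonian: the chiral anti-commutation $\{\sigma_z,\mH\}=0$, which yields particle-hole, and the intra-valley inversion of Theorem~\ref{theoreminv}, which will yield the new statement that $E_{\bm k}=E_{-\bm k}$. Particle-hole is essentially a one-liner: if $\mH(\bm r)\psi_{\bm k}(\bm r)=E_{\bm k}\psi_{\bm k}(\bm r)$, then $\sigma_z\psi_{\bm k}$ is an eigenstate at the same Bloch momentum with eigenvalue $-E_{\bm k}$; one only has to check that $\sigma_z$ commutes with the sublattice-trivial phase $e^{i(\bm k-\tau_z\bm K)\cdot\bm a}$ in the Bloch boundary condition Eqn.~(\ref{bc_chiralbasis}), which it does.

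The substantive step is the spectral inversion. Starting from $\mH(\bm r)\psi_{\bm k}(\bm r)=E_{\bm k}\psi_{\bm k}(\bm r)$, I would apply $\mI$ on both sides, insert $\mI^\dag\mI=1$, and use Eqn.~(\ref{Hpseudoinvsym}) to rewrite the left-hand side as $\mH(-\bm r)\mI\psi_{\bm k}(\bm r)$. Relabelling $\bm r\to-\bm r$ then gives $\mH(\bm r)\,\widetilde\psi(\bm r)=E_{\bm k}\,\widetilde\psi(\bm r)$ with $\widetilde\psi(\bm r)\equiv\mI\psi_{\bm k}(-\bm r)$, so $\widetilde\psi$ is degenerate in energy with $\psi_{\bm k}$. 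The remaining task is to identify its crystal momentum.

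The only delicate point is the twisted Bloch boundary condition, because the phase $e^{i(\bm k-\tau_z\bm K)\cdot\bm a}$ carries a $\tau_z$. Shifting $\bm r\to\bm r+\bm a$ inside $\widetilde\psi$ corresponds to $\bm r\to\bm r-\bm a$ in the argument of $\psi_{\bm k}$, which by Eqn.~(\ref{bc_chiralbasis}) pulls out $e^{-i(\bm k-\tau_z\bm K)\cdot\bm a}$. I then move this phase past $\mI=\sigma_z\tau_y$ using the anticommutation $\tau_y\tau_z=-\tau_z\tau_y$, which flips the sign of the $\tau_z$ term. The net result is $\widetilde\psi(\bm r+\bm a)=e^{i(-\bm k-\tau_z\bm K)\cdot\bm a}\widetilde\psi(\bm r)$, identifying $\widetilde\psi$ as a Bloch state at momentum $-\bm k$ and proving $E_{-\bm k}=E_{\bm k}$.

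I expect the only real obstacle to be bookkeeping in this last step: carefully tracking the sign flips produced by $\{\tau_y,\tau_z\}=0$ and making sure the resulting phase matches the canonical Bloch form at $-\bm k$ rather than at some shifted momentum. Once that is clean, combining with the particle-hole argument gives both parts of the corollary, valid at any twist angle because Theorem~\ref{theoreminv} and the chiral structure hold for every $\alpha$.
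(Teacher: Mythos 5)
Your proposal is correct and follows essentially the same route as the paper: the paper's proof is precisely the observation that $\mI\Psi_{\bm k}(-\bm r)$ is a degenerate eigenstate at opposite Bloch momentum, via $\mH(\bm r)\mI\Psi_{\bm k}(-\bm r)=\mI\mH(-\bm r)\Psi_{\bm k}(-\bm r)=E\,\mI\Psi_{\bm k}(-\bm r)$. Your explicit check of the twisted boundary condition Eqn.~(\ref{bc_chiralbasis}) --- using $\{\tau_y,\tau_z\}=0$ to show the new state carries momentum $-\bm k$ rather than some shifted momentum --- is a correct and worthwhile detail that the paper leaves implicit.
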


This follows directly from Theorem~\ref{theoreminv}. Denote the sublattice A/B wavefunctions as
\beqn
\Psi_{\bm k} = \left(\begin{matrix}\phi_{\bm k}\\\chi_{\bm k}\end{matrix}\right),\nonumber
\eeqn
where each of $\phi_{\bm k}$ and $\chi_{\bm k}$ is a two-component spinor representing the bottom and top layer's degrees of freedom. If we know $\Psi_{\bm k}$ as an eigenstate of energy $E$ at \Blochk $\bm k$, then $\mI\Psi_{\bm k}(-\bm r)$ is the eigenstate of the same energy but with an opposite \Blochkk:
\beqn
\mH(\bm r)\mI\Psi_{\bm k}(-\bm r) = \mI\mH(-\bm r)\Psi_{\bm k}(-\bm r) = E\mI\Psi_{\bm k}(-\bm r).\nonumber
\eeqn

We have thus proved the spectrum inversion symmetry by explicitly constructing eigenstates of the same energy and opposite \Blochkk. This construction in fact also illustrates a spinor structure of the eigenstates.

\begin{theorem}
\label{psiminusk}
At all twist angles for any \Blochk $\bm k$, there exists a phase $\zeta_{\bm k}$, such that,
\beqn\phi_{\bm k}(\bm r) &=&+e^{i\zeta_{\bm k}}\tau_y\phi_{-\bm k}(-\bm r),\label{remarkinvk}\\
\chi_{\bm k}(\bm r) &=&-e^{i\zeta_{\bm k}}\tau_y\chi_{-\bm k}(-\bm r),\nonumber\\
\zeta_{\bm k} &=& -\zeta_{-\bm k}.\nonumber
\eeqn
\end{theorem}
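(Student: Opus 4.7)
The plan is to directly exploit the intra-valley inversion symmetry of Theorem~\ref{theoreminv} and the observation already used in the proof of Corollary~\ref{inversion}: namely, that $\mathcal{I}\Psi_{\bm k}(-\bm r)$ is an eigenstate of $\mathcal{H}(\bm r)$ at the same energy as $\Psi_{\bm k}(\bm r)$ but at momentum $-\bm k$. Assuming the relevant eigenspace is non-degenerate (or that a consistent basis has been fixed in any degenerate subspace), one then has
\begin{equation}
\mathcal{I}\,\Psi_{\bm k}(-\bm r) \;=\; e^{-i\zeta_{-\bm k}}\,\Psi_{-\bm k}(\bm r),
\label{eq:invrelation}
\end{equation}
for some phase $\zeta_{-\bm k}$ fixed (up to $2\pi$) by the choice of wavefunction normalization.

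First I would check that the ansatz $\tilde{\Psi}(\bm r)\equiv\sigma_z\tau_y\Psi_{\bm k}(-\bm r)$ indeed carries Bloch momentum $-\bm k$ under the boundary condition in Eqn.~(\ref{bc_chiralbasis}). This is a short calculation using only $\{\tau_y,\tau_z\}=0$ and $[\sigma_z,\bm\tau]=0$, which turns the phase $e^{i(\bm k-\tau_z\bm K)\cdot\bm a}$ into $e^{i(-\bm k-\tau_z\bm K)\cdot\bm a}$, as required. Next I would project Eqn.~(\ref{eq:invrelation}) onto the sublattice blocks using the decomposition in Eqn.~(\ref{chiralbasis}): since $\sigma_z$ is diagonal with eigenvalue $+1$ on the $\phi$ block and $-1$ on the $\chi$ block, the upper two rows yield $\tau_y\phi_{\bm k}(-\bm r)=e^{-i\zeta_{-\bm k}}\phi_{-\bm k}(\bm r)$ and the lower two rows yield $-\tau_y\chi_{\bm k}(-\bm r)=e^{-i\zeta_{-\bm k}}\chi_{-\bm k}(\bm r)$. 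Relabeling $\bm k\leftrightarrow-\bm k$ gives exactly the two lines of Theorem~\ref{psiminusk}.

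Finally, the reciprocity $\zeta_{\bm k}=-\zeta_{-\bm k}$ is obtained by iterating the relation. Since $\mathcal{I}^2=\sigma_z^2\tau_y^2=\mathbb{1}$, applying $\phi_{\bm k}(\bm r)=e^{i\zeta_{\bm k}}\tau_y\phi_{-\bm k}(-\bm r)$ twice and using $\tau_y^2=\mathbb{1}$ produces $\phi_{\bm k}(\bm r)=e^{i(\zeta_{\bm k}+\zeta_{-\bm k})}\phi_{\bm k}(\bm r)$, forcing $\zeta_{\bm k}+\zeta_{-\bm k}\equiv 0\pmod{2\pi}$.

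The main obstacle is the non-degeneracy assumption needed to go from ``$\mathcal{I}\Psi_{\bm k}(-\bm r)$ is \emph{some} eigenstate at $-\bm k$'' to the sharp identification in Eqn.~(\ref{eq:invrelation}). At generic momenta the flat bands are expected to be simple and the step is immediate. At high-symmetry momenta such as the \mr Dirac point $\bm K$, where the zero modes form a two-dimensional subspace, one must first diagonalize $\mathcal{I}\circ(\bm r\to-\bm r)$ inside that subspace; the resulting eigenvalues $\eta=\pm1$ are precisely the inversion parities that appear in Eqn.~(\ref{zeromodewavefunction}), so this basis choice is natural. A minor subsidiary concern is that $\zeta_{\bm k}$ is only defined modulo $2\pi$ and need not be globally smooth across wavefunction nodes, but this is harmless for the pointwise identity being proved.
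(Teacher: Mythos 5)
Your proposal is correct and follows essentially the same route as the paper: construct $\mathcal{I}\Psi_{\bm k}(-\bm r)$ as an eigenstate at $-\bm k$, invoke non-degeneracy to fix the $U(1)$ phase, project onto the $\sigma_z=\pm1$ blocks, and iterate the relation to obtain $\zeta_{\bm k}=-\zeta_{-\bm k}$. One small correction to your closing degeneracy discussion: at magic angles the two-fold flat-band degeneracy occurs at \emph{every} $\bm k$, and $\mathcal{I}\circ(\bm r\to-\bm r)$ maps the $\bm K$ subspace to the $\bm K'\equiv-\bm K$ subspace rather than acting within it, so the paper instead resolves the degeneracy by labeling the zero modes by their chiral ($\sigma_z$) eigenvalue, which $\mathcal{I}$ preserves, rendering each sector effectively non-degenerate.
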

\begin{proof}
Below Corollary~\ref{inversion}, we explicitly constructed the eigenstate of opposite \Blochkk. At non-degenerate $\bm k$, our constructed wavefunction must be proportional to the wavefunction at $-\bm k$ up to a $U(1)$ phase,
\begin{equation}
\Psi_{-\bm k}(\bm r) = e^{i\zeta_{\bm k}}\sigma_z\tau_y\Psi_{\bm k}(-\bm r),\nonumber
\end{equation}
from which Eqn.~(\ref{remarkinvk}) follows immediately. The fact that the phase $\zeta_{\bm k}$ is anti-symmetric is seen by applying Eqn.~(\ref{remarkinvk}) twice. For degenerate zero modes, one can label them by the chiral eigenvalue and find the same conclusion.
\end{proof}

Theorem~\ref{psiminusk} can be regarded as a gauge fixing condition. One can perform gauge transformations,
\begin{equation}
\phi_{\bm k} \rightarrow e^{i\zeta'_{\bm k}}\phi_{\bm k}.
\end{equation}
to tune the $\zeta_{\bm k}$ field:
\begin{equation}
\zeta_{\pm\bm k} \rightarrow \zeta_{\pm\bm k} \mp (\zeta'_{\bm k} - \zeta'_{-\bm k}).
\end{equation}

The only obstruction of such tuning is at inversion symmetric points where $\bm k_{inv} = -\bm k_{inv}$ modulo reciprocal lattice vectors: there $\zeta'_{\bm k}$ and $\zeta'_{-\bm k}$ cancel, and $\zeta_{\bm k_{inv}}$ is either $0$ or $\pi$. In practice, the \ourinv eigenvalue can be read off from the transformation property of the \mr Gamma point wavefunction (or from wavefunctions at other $\bm k_{inv}$):
\begin{equation}
\phi_{\bm k=\bm 0}(\bm r) = \eta\tau_y\phi_{\bm k = \bm 0}(-\bm r),\quad\eta=\pm1.\label{readoffinv}
\end{equation}

We numerically observed (for the first three magic angles) that there is a coincidence between the zero mode's \ourinv eigenvalue and the parity of magic angle: we found $\eta$=$+1$ for the 1\ts{st}, 3\ts{rd} magic angles, while for the 2\ts{nd} magic angle $\eta$=$-1$. In FIG.~\ref{evolution_1_2}, we monitored the evolution of the low lying eigenstates at the Gamma point from the 1\ts{st} to the 2\ts{nd} magic angle (we plot non-negative energies only since the full spectrum has particle-hole symmetry). Eigenstates at the Gamma point are either singlet or doublet, as they are one and two dimensional irreducible representation of the symmetry group (generated by $\mC_3$ and $\mM_y$) \cite{Oskar_Wannier}. An inversion eigenvalue transition is clearly visible in FIG.~\ref{evolution_1_2}.

\begin{figure}[h]
    \centering
    \includegraphics[width=0.4\textwidth]{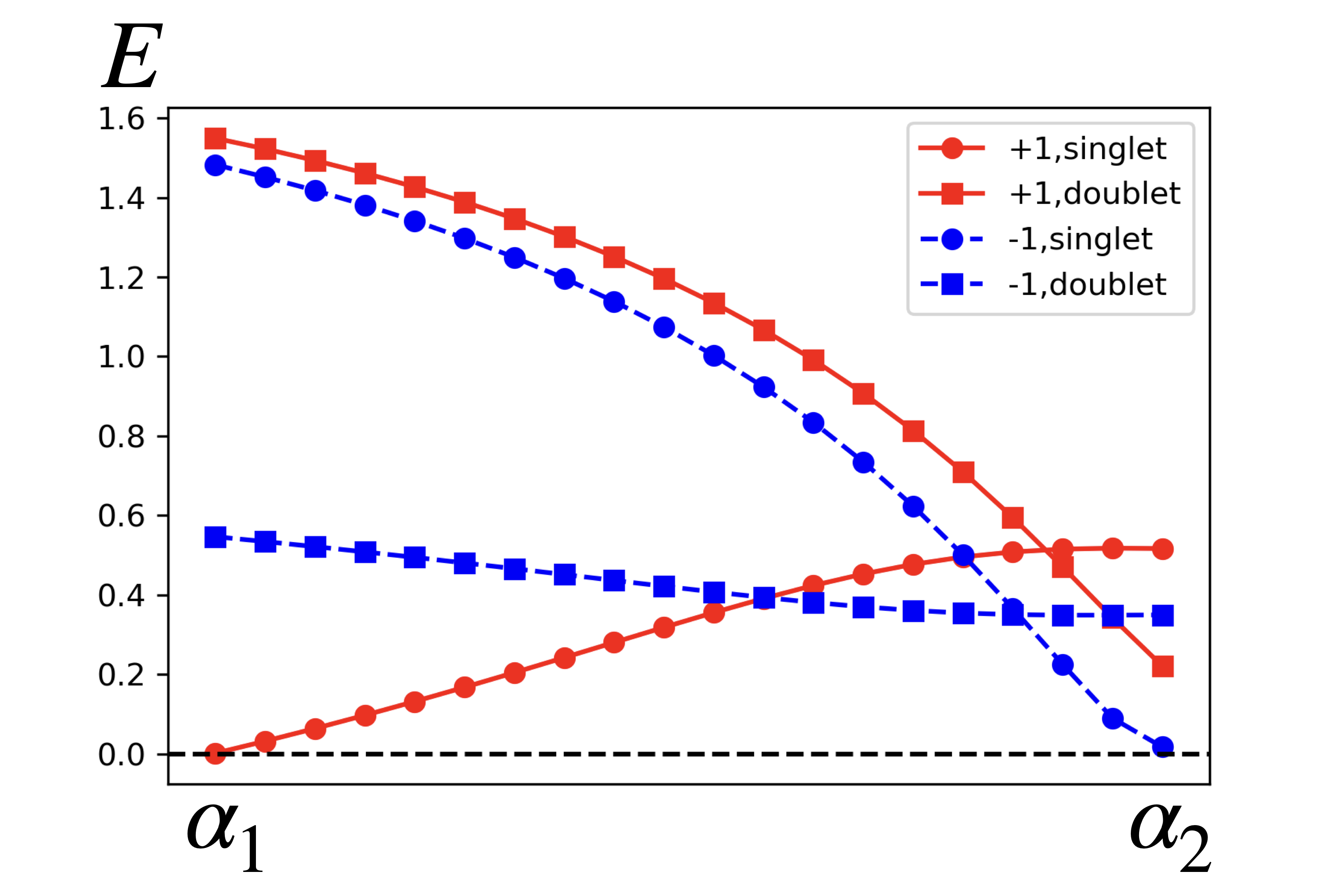}
    \caption{Evolution at the \mr Gamma point of low lying non-negative energy states from the first magic angle $\alpha_1$ to the second magic angle $\alpha_2$. A singlet (doublet) state is represented as dots (squares). The inversion symmetric (anti-symmetric) state is represented by red solid (blue dashed) lines. The black horizontal line indicates zero energy. Due to the particle-hole symmetry of the chiral model, the evolution of negative energy states is obtained by reflecting the figure.}\label{evolution_1_2}
\end{figure}

We hypothesize that the alternating parity of magic angle zero mode wavefunctions is a generic feature and will hold for all magic angles: that is, the inversion eigenvalue of $n_{th}$ magic angle flatband wavefunction is $-(-1)^n$.

\section{Zero Mode Wavefunctions and the Spinor Structure}\label{sec:spinor}
In this section, we reexamine the zero mode solution of Ref.~(\onlinecite{Grisha_TBG}), derive the spinor structure of the zero mode wavefunction as shown in Eqn.~(\ref{zeromodewavefunction}), and show its intricate relation to quantum Hall physics.

\subsection{Chiral model and the zero modes}\label{Sec:ZeroModesChiralModel}
Following Ref.~(\onlinecite{Grisha_TBG}), we show that the chiral Hamiltonian in Eqn.~(\ref{chiral-form}) has two zero modes. The eigenvectors of these two zero modes must satisfy:
\begin{equation}
0=\mD(\bm r)\left(\begin{array}{c}\chi_{b,\bm k}(\bm r) \\\chi_{t,\bm k}(\bm r)\end{array}\right);\quad0=\mD^\dagger(\bm r)\left(\begin{array}{c}\phi_{b,\bm k}(\bm r) \\\phi_{t,\bm k}(\bm r)\end{array}\right).\label{zeromode2component}
\end{equation}

In Ref.~(\onlinecite{Grisha_TBG}), the authors found and proved that, for a discrete series of values of $\alpha$ (corresponding to magic angles), the chiral model admits exact flatbands. At the crux of their analysis is the fact that at magic angles, both components of the \mr Dirac $\bm K$ point wavefunction $\phi_{\bm K}=(\phi_{b,\bm K},\phi_{t,\bm K})^T$ vanish at a common point:
\begin{equation}
\bm r_0=\frac{1}{3}(\bm a_1+2\bm a_2),\label{def_r0}
\end{equation}
the \emph{BA} stacking point, which permits an explicit construction of the zero mode wavefunctions. Here we review several key steps (Theorem~\ref{r0-zero} to Theorem~\ref{cTBGwavefunction}) of Ref.~(\onlinecite{Grisha_TBG}) in deriving the zero mode wavefunctions. We refer the readers to Ref.~(\onlinecite{Grisha_TBG}) for more details.

A crucial step in deriving the zero mode solutions in Ref.~(\onlinecite{Grisha_TBG}) is Theorem~\ref{r0-zero}, which follows from the translation and $\mC_3$ rotation symmetry:
\begin{theorem}\label{r0-zero}
For all twisted angles, $\phi_{\bm K,t}(\pm\bm r_0)$=$0$ and $\chi_{\bm K,t}(\pm\bm r_0)$=$0$.
\end{theorem}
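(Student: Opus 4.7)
The plan is to combine the $\mC_3$ rotation symmetry of the chiral model with the Bloch boundary condition (Eqn.~(\ref{bc_chiralbasis})) at the moire Dirac point $\bm K$, exploiting that both $\bm K$ and $\pm\bm r_0$ are $\mC_3$-fixed modulo the moire reciprocal and direct lattices, respectively.

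First I would work out how $\mC_3$ acts on the chiral-basis wavefunction. In the non-chiral basis of Eqn.~(\ref{macdonaldbasis}), $\mC_3$ rotates the position as $\bm r\to\mC_3^{-1}\bm r$ and multiplies the sublattice spinor by $e^{i\frac{2\pi}{3}\sigma_z}$ in the standard Dirac way. Conjugating by the gauge factor $e^{-i(\bm K_+^\Gamma+\tau_z\bm K)\cdot\bm r}$ appearing in Eqn.~(\ref{unitarytrans}) converts this into a $\mC_3$ action on $\Psi_c$ that picks up an extra layer-dependent phase from the $\tau_z\bm K$ term under the position rotation, and this is the origin of the top/bottom asymmetry in the final statement.

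Next, at $\bm k=\bm K$ the momentum is $\mC_3$-invariant modulo a reciprocal vector, so the zero modes form eigenvectors of the effective $\mC_3$ with definite eigenvalues that can be read off from the action of $\mC_3$ on the zero-mode equation for $\mD^{\dag}$ in Eqn.~(\ref{AshvinHamiltonian}). Evaluating the identity $\Psi_{c,\bm K}(\bm r)=U_{\mC_3}\Psi_{c,\bm K}(\mC_3^{-1}\bm r)$ at $\bm r=\bm r_0$ and using $\mC_3^{-1}\bm r_0=\bm r_0+\bm a$ for a suitable moire lattice vector $\bm a$, the Bloch relation Eqn.~(\ref{bc_chiralbasis}) reduces $\Psi_{c,\bm K}(\bm r_0+\bm a)$ back to $\Psi_{c,\bm K}(\bm r_0)$ times $e^{i(\bm K-\tau_z\bm K)\cdot\bm a}$. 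Collecting everything yields componentwise identities $\phi_{\bm K,\lambda}(\bm r_0)=\omega_\lambda\,\phi_{\bm K,\lambda}(\bm r_0)$ and $\chi_{\bm K,\lambda}(\bm r_0)=\tilde\omega_\lambda\,\chi_{\bm K,\lambda}(\bm r_0)$, with $\omega_\lambda,\tilde\omega_\lambda$ cube roots of unity whose value depends on the layer index $\lambda\in\{t,b\}$.

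A direct evaluation then shows that $\omega_t$ and $\tilde\omega_t$ are nontrivial cube roots of unity while $\omega_b=\tilde\omega_b=1$, which forces $\phi_{\bm K,t}(\bm r_0)=\chi_{\bm K,t}(\bm r_0)=0$ and leaves the bottom-layer components unconstrained (indeed, $\phi_{\bm K,b}$ will become the building block $\mG$ appearing in Eqn.~(\ref{zeromodewavefunction})). Repeating the same argument at the other $\mC_3$-fixed point $-\bm r_0$, which satisfies $\mC_3^{-1}(-\bm r_0)=-\bm r_0+\bm a'$ for a different moire translation $\bm a'$, gives the analogous vanishing $\phi_{\bm K,t}(-\bm r_0)=\chi_{\bm K,t}(-\bm r_0)=0$. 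The main obstacle is entirely one of bookkeeping: one must carefully track three interlocking phase factors, namely the Dirac $\mC_3$ phase on sublattices, the Bloch translation phase from Eqn.~(\ref{bc_chiralbasis}), and the $\tau_z\bm K$ gauge phase from Eqn.~(\ref{unitarytrans}), and verify that their product is a nontrivial cube root of unity on precisely the top-layer components and unity on the bottom ones. Nothing in the argument uses the value of $\alpha$, so the vanishing holds at all twist angles as asserted.
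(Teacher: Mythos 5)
Your route---combining $\mC_3$ with the moir\'e translation (Bloch) condition at $\bm K$---is exactly the one the paper invokes: the text states only that Theorem~\ref{r0-zero} ``follows from the translation and $\mC_3$ rotation symmetry'' and defers the details to Ref.~\onlinecite{Grisha_TBG}. Most of your bookkeeping is also right: within the $\phi$ sector both components sit on sublattice $A$, so the Dirac spinor phase is common to both and the entire layer asymmetry comes from the $\tau_z\bm K$ gauge phase in Eqn.~(\ref{bc_chiralbasis}), which at $\bm k=\bm K$ makes $\phi_{\bm K,b}$ strictly periodic while $\phi_{\bm K,t}$ picks up $e^{2i\bm K\cdot\bm a}$ under translation by $\bm a$; with $\mC_3^{-1}\bm r_0-\bm r_0=-\bm a_2$ (and $+\bm a_2$ at $-\bm r_0$) this is a nontrivial cube root of unity.

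There is, however, one genuine gap. Writing $\phi_{\bm K}(\mC_3^{-1}\bm r)=\lambda\,\phi_{\bm K}(\bm r)$, the eigenvalue $\lambda$ is a property of the particular zero mode, not of the zero-mode operator, so it cannot be ``read off from the action of $\mC_3$ on the zero-mode equation'' as you assert; only the \emph{relative} phase between the two layers is fixed by bookkeeping. Your identities really read $\phi_{\bm K,b}(\pm\bm r_0)=\lambda\,\phi_{\bm K,b}(\pm\bm r_0)$ and $\phi_{\bm K,t}(\pm\bm r_0)=\lambda\,\omega_{\pm}\,\phi_{\bm K,t}(\pm\bm r_0)$ with $\omega_{\pm}$ nontrivial cube roots of unity. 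If $\lambda$ equaled $\omega_{+}^{-1}$ or $\omega_{-}^{-1}$ rather than $1$, the same algebra would force the \emph{bottom}-layer component to vanish at $\pm\bm r_0$ (and both components to vanish at the rotation center $\bm r=\bm 0$) while leaving one of the top-layer values unconstrained---the opposite of the theorem. To close the argument you must establish $\lambda=1$. The standard way is continuity: the $\phi$-sector zero mode at $\bm K$ is nondegenerate for every $\alpha$, so $\lambda(\alpha)\in\{1,e^{\pm 2\pi i/3}\}$ is constant in $\alpha$, and at $\alpha=0$ the unique solution of $\bar\partial\phi=0$ compatible with Eqn.~(\ref{bc_chiralbasis}) is the constant bottom-layer spinor $(1,0)^{T}$, for which $\lambda=1$. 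With that input, your argument (applied verbatim to $\chi$ with $\mD$ in place of $\mD^{\dag}$) does give the theorem at all twist angles; without it, the ``direct evaluation'' step is not purely bookkeeping.
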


\begin{theorem}
\label{fermi-velocity}
The Fermi velocity defined by:
\beqn
v_F(\alpha) \equiv \sum_{l=t/b}\phi_{l,\bm K}(\bm r)\phi_{l,\bm K}(-\bm r),
\eeqn
is independent of $\bm r$.
\end{theorem}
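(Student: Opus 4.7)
My plan is to show that $v_F(\alpha)$ is both holomorphic and periodic, and then invoke Liouville's theorem on the moir\'e torus to conclude it is constant.

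First I would write down the zero mode equations that follow from $\mD^{\dag}(\bm r)(\phi_{b,\bm K},\phi_{t,\bm K})^T=0$, namely
\begin{equation}
\bar\partial \phi_{b,\bm K}(\bm r) = -i\alpha U_{\phi}(\bm r)\,\phi_{t,\bm K}(\bm r),\quad
\bar\partial \phi_{t,\bm K}(\bm r) = -i\alpha U_{\phi}(-\bm r)\,\phi_{b,\bm K}(\bm r),
\end{equation}
and then compute $\bar\partial v_F(\alpha)$ directly, treating the two arguments $\bm r$ and $-\bm r$ independently via the chain rule (which brings in a minus sign on the $\phi_{l,\bm K}(-\bm r)$ pieces). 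Substituting the zero mode equations, the four resulting cross terms pair up: the $\phi_{t,\bm K}(\bm r)\phi_{b,\bm K}(-\bm r)$ piece coming from differentiating $\phi_{b,\bm K}(\bm r)$ is cancelled exactly by the piece coming from differentiating $\phi_{t,\bm K}(-\bm r)$, and similarly the $\phi_{b,\bm K}(\bm r)\phi_{t,\bm K}(-\bm r)$ terms cancel. Thus $\bar\partial v_F(\alpha)=0$, so $v_F(\alpha)$ is a holomorphic function of $z$.

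Next I would check periodicity under $\bm r\mapsto \bm r+\bm a$ for any moir\'e lattice vector $\bm a$, using the Bloch boundary condition Eqn.~(\ref{bc_chiralbasis}) at $\bm k=\bm K$. On the bottom layer the twist phase $e^{i(\bm K-\bm K)\cdot\bm a}$ is trivial, so $\phi_{b,\bm K}$ is genuinely periodic and the product $\phi_{b,\bm K}(\bm r)\phi_{b,\bm K}(-\bm r)$ is obviously invariant. On the top layer $\phi_{t,\bm K}$ picks up a phase $e^{2i\bm K\cdot\bm a}$ under $\bm r\mapsto\bm r+\bm a$, while $\phi_{t,\bm K}(-\bm r-\bm a)=e^{-2i\bm K\cdot\bm a}\phi_{t,\bm K}(-\bm r)$, so the two phases cancel in the product. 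Hence $v_F(\alpha)$ is invariant under moir\'e translations.

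Combining the two facts, $v_F(\alpha)$ descends to a holomorphic function on the compact moir\'e torus, and any such function is necessarily constant. This gives the claim. The computationally delicate step is the cancellation in $\bar\partial v_F$: one has to be careful with the sign from the chain rule on $\phi(-\bm r)$ and with the fact that $U_\phi$ is evaluated at $\bm r$ in one zero mode equation and at $-\bm r$ in the other, but once lined up the cancellation is algebraic and immediate. The conceptual point — that $v_F$ is morally a holomorphic quadratic differential on the moir\'e torus and therefore rigid — is what makes the definition natural and $\alpha$-independent spatial structure inevitable.
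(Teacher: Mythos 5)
Your proof is correct and follows exactly the paper's route: use the zero-mode equations from $\mD^\dagger(\bm r)\phi_{\bm K}=0$ to show $\bar\partial v_F=0$, note cell-periodicity from the Bloch boundary condition at $\bm k=\bm K$, and conclude constancy by Liouville on the torus. The explicit pairwise cancellation of the four cross terms and the phase cancellation between $\phi_{t,\bm K}(\bm r)$ and $\phi_{t,\bm K}(-\bm r)$ are both right; you have simply filled in the details the paper leaves as ``straightforward.''
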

\begin{proof}
It is straightforward to find $v_F(\alpha)$ is holomorphic, {\it i.e.} $\bar\partial v_F(\alpha)=0$, by using the zero mode equations that $\phi_{l,\bm K}$ satisfy. Then, $v_{F}(\alpha)$ must be a constant since it is also cell-periodic.
\end{proof}

At magic angles (where the low lying two bands become dispersionless), the Fermi velocity goes to zero. Since the top component of $\phi_{\bm K}$ vanishes for all twist angles at $\pm\bm r_0$, it follows from the vanishing Fermi velocity that the bottom component must at least have one common zero with the top component, at either $+\bm r_0$ or $-\bm r_0$. In fact, the exact flatband condition coincides with the condition that two components of the wavefunction have a common zero, as pointed out in Ref.~(\onlinecite{Grisha_TBG}), where the authors proved it by explicitly constructing the zero mode wavefunctions.

\begin{theorem}
\label{cTBGwavefunction}
The magic angle zero mode wavefunctions take the following form \cite{Grisha_TBG,Grisha_TBG2} (up to a normalization factor),
\beqn
\phi_{\bm k}(\bm r) &=& \phi_{\bm K}(\bm r)F_{\bm k}(z),\nonumber\\
\quad F_{\bm k}(z) &=& e^{z_k^*(z-\frac{1}{2}z_k)}\frac{\sigma(z-z_k)}{\sigma(z-z_0)},\label{cTBGwf}
\eeqn
where $z_0$ and $z_k$ are the complex coordinates of $\bm r_0$, the \emph{BA} stacking point defined in Eqn.~(\ref{def_r0}), and:
\begin{equation}
\bm r_{\bm k}^a = \bm r_0^a + \epsilon^{ab}(\bm k-\bm K)_b.\label{defrk}
\end{equation}
The complex coordinate for a vector $\bm r$ is defined as usual:
\begin{equation}
\bm r \rightarrow z \equiv \frac{r_x + i r_y}{\sqrt2}.
\end{equation}
\end{theorem}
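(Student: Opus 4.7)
The plan is to check the proposed factorized ansatz directly against the zero-mode equation and the Bloch boundary condition, using Weierstrass $\sigma$-function machinery to pin down the quasi-periodic scalar factor, and to close with a dimension count on the torus.

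First I would exploit a simple multiplicative structure of zero modes. The operator $\mD^{\dag}$ in Eqn.~(\ref{AshvinHamiltonian}) acts as $-i\bar\partial$ on its diagonal and as pure multiplication by $\alpha U_{\phi}(\pm\bm r)$ off the diagonal; the latter trivially commutes with multiplication by any scalar function, while $\bar\partial$ annihilates any holomorphic function of $z$. By the Leibniz rule, $\mD^{\dag}\bigl[\phi_{\bm K}(\bm r)F(z)\bigr]=F(z)\,\mD^{\dag}\phi_{\bm K}(\bm r)+\phi_{\bm K}(\bm r)(-i\bar\partial F(z))$, and both terms vanish: the first because we are at a magic angle and $\phi_{\bm K}$ is a zero mode, the second because $F$ is holomorphic in $z$. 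Hence any holomorphic-in-$z$ scalar factor $F$ multiplying $\phi_{\bm K}$ automatically solves the zero-mode equation.

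Next I would translate the remaining physical requirements into conditions on $F_{\bm k}(z)$. \emph{Regularity:} by Theorem~\ref{r0-zero} both components of $\phi_{\bm K}$ vanish at $\bm r_0$, so $F_{\bm k}$ may have at worst a simple pole at $z=z_0$ and must be holomorphic elsewhere. \emph{Boundary condition:} the chiral-basis Bloch relation Eqn.~(\ref{bc_chiralbasis}), restricted to the $A$-sublattice components ($\tau_z=+1$), reads $\phi_{\bm k}(\bm r+\bm a)=e^{i(\bm k-\bm K)\cdot\bm a}\phi_{\bm k}(\bm r)$; since $\phi_{\bm K}$ is cell-periodic (the $\bm k=\bm K$ instance), the scalar factor must obey
\begin{equation}
F_{\bm k}(z+z_a)=e^{i(\bm k-\bm K)\cdot\bm a}F_{\bm k}(z),\nonumber
\end{equation}
where $z_a$ denotes the complex coordinate of the lattice vector $\bm a$. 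I would then recognise this as the problem of constructing a holomorphic section of a degree-one line bundle over the torus with prescribed monodromy, a problem solved up to scale by the classical Weierstrass $\sigma$-function ansatz: $F_{\bm k}(z)=e^{z_k^*(z-z_k/2)}\sigma(z-z_k)/\sigma(z-z_0)$ is holomorphic except at the mandated simple pole at $z_0$, has its unique compensating simple zero at $z_k$, and under $z\mapsto z+z_a$ acquires the factor $\exp[z_k^* z_a+\eta_{z_a}(z_0-z_k)]$ via the transformation law $\sigma(z+\omega)=\chi(\omega)e^{\eta_\omega(z+\omega/2)}\sigma(z)$.

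Matching the acquired factor to the required Bloch phase for each of the two primitive translations produces a linear system for $z_k$, which is inverted using the Legendre relation $\eta_1\omega_2-\eta_2\omega_1=2\pi i$ and the convention $z=(x+iy)/\sqrt{2}$ (so that $2\,\mathrm{Im}(z_1^* z_2)$ reproduces the physical 2D cross product), yielding the explicit formula $\bm r_k^a=\bm r_0^a+\epsilon^{ab}(\bm k-\bm K)_b$ of Eqn.~(\ref{defrk}). Uniqueness follows from a Riemann-Roch count: the space of sections with a single pole at $z_0$ and the specified monodromy is one-dimensional, so the exhibited $F_{\bm k}$ exhausts the zero-mode subspace at each $\bm k$ up to normalization. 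The main obstacle is precisely this quasi-periodicity bookkeeping -- arranging the Weierstrass exponentials, the prefactor $e^{z_k^*(z-z_k/2)}$, the Legendre relation, and the Bloch phase to combine consistently and produce the geometric assignment in Eqn.~(\ref{defrk}). Everything else reduces cleanly to Leibniz's rule and the zero-mode defining equation.
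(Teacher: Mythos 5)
Your overall strategy --- holomorphic multiplication preserves zero modes of $\mD^\dagger$, then fix the quasi-periodic scalar factor with $\sigma$-functions and the Legendre relation, then count sections --- is exactly the route the paper (following Tarnopolsky, Kruchkov and Vishwanath) takes, and those parts are sound. But there is one genuine gap, and it sits at the only place where the hypothesis ``magic angle'' actually enters. In your regularity step you write that ``by Theorem~\ref{r0-zero} both components of $\phi_{\bm K}$ vanish at $\bm r_0$.'' Theorem~\ref{r0-zero} does not say that: it establishes only that the \emph{top-layer} components $\phi_{\bm K,t}$ and $\chi_{\bm K,t}$ vanish at $\pm\bm r_0$, and it does so \emph{at every twist angle}. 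If both components of $\phi_{\bm K}$ vanished at $\bm r_0$ as a consequence of that theorem alone, your construction would manufacture exact flat bands at every angle, which is false. The vanishing of the bottom-layer component $\phi_{\bm K,b}$ at $\bm r_0$ is precisely the magic-angle condition, and it must be supplied by Theorem~\ref{fermi-velocity}: since $v_F(\alpha)=\sum_l\phi_{l,\bm K}(\bm r)\phi_{l,\bm K}(-\bm r)$ is $\bm r$-independent, evaluating it at $\bm r=\bm r_0$ (where the top component already vanishes) gives $v_F=\phi_{b,\bm K}(\bm r_0)\phi_{b,\bm K}(-\bm r_0)$, so the magic-angle condition $v_F(\alpha)=0$ is what forces the bottom component to share a zero with the top one at $+\bm r_0$ or $-\bm r_0$. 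Without this step the simple pole of $F_{\bm k}$ at $z_0$ is uncancelled in the bottom component; equivalently, $\bar\partial F_{\bm k}$ carries a delta function at $z_0$, so $\mD^\dagger\bigl(\phi_{\bm K}F_{\bm k}\bigr)$ acquires a term proportional to $\phi_{b,\bm K}(\bm r_0)\,\delta^{(2)}(\bm r-\bm r_0)$ that vanishes only when $\phi_{b,\bm K}(\bm r_0)=0$.

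The remainder --- the Leibniz-rule observation that $\mD^\dagger[\phi_{\bm K}F]=F\,\mD^\dagger\phi_{\bm K}$ for holomorphic $F$, the $\sigma$-function monodromy matching that yields Eqn.~(\ref{defrk}), and the one-dimensionality of the space of admissible sections --- is correct and matches the paper's presentation; the paper's ``modified'' sigma function, with $\sigma(z+a_i)=-e^{a_i^*(z+a_i/2)}\sigma(z)$, merely streamlines the $\eta_\omega$ and Legendre-relation bookkeeping you describe.
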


Note here we have written the zero mode wavefunction in terms of the ``modified Weierstrass sigma'' function $\sigma(z)$, which is slightly different from Ref.~(\onlinecite{Grisha_TBG}), where the authors used Jacobi theta functions. It has been shown \cite{haldanemodularinv,Jie_MonteCarlo,scottjiehaldane,Jie_Dirac} that both the sigma function and theta function can be used to define the quantum Hall states, and the advantage of the former is modular invariance. The Weierstrass sigma function satisfies a similar quasi-periodic boundary condition as the Jacobi theta function:
\beqn
\sigma(z+a_i)=-e^{a_i^*(z+\frac{1}{2}a_i)}\sigma(z),\label{periodicity_sigma}
\eeqn
where $a_{i=1,2}$ are the complex coordinates of the primitive lattice vectors $\bm a_{1,2}$ shown in FIG.~(\ref{mBZ}). The quantum Hall wavefunction and the modified Weierstrass sigma function $\sigma(z)$ are reviewed in detail in Appendix~\ref{sec:QH_wavefunction}. Note that the factor $\exp(-\frac{1}{2}|z_k|^2)$ in Eqn.~(\ref{cTBGwf}) is needed to ensure that the normalization is periodic in $\bm k$.

The presence of the quasi-periodic elliptic function in the zero mode solution is reminiscent of the lowest Landau level physics on torus \cite{haldanetorus1,XiDai_PseudoLandaulevel}. We find it conceptually and practically advantageous to rewrite Eqn.~(\ref{cTBGwf}) in the following form, as a product of a quantum Hall wavefunction and a quasi-periodic spinor wavefunction:
\begin{equation}
\phi_{\bm k}(\bm r) = \left(\begin{matrix}\mathcal{G}_1(\bm r)\\\mathcal{G}_2(\bm r)\end{matrix}\right)\times\Phi_{\bm k}(\bm r).
\end{equation}
where $\mathcal{G}_{1/2}(\bm r) \equiv \phi_{\bm K,b/t}(\bm r)/\left(\sigma(z-z_0)e^{-\frac{1}{2}|z|^2}\right)$ and the quantum Hall wavefunction $\Phi_{\bm k}$ is,
\begin{equation}
\Phi_{\bm k}(\bm r) = e^{z_k^*z}\sigma(z-z_k)e^{-\frac12|z_k|^2}e^{-\frac12|z|^2},\label{quantumhallwf}
\end{equation}
whose boundary condition can be found in Eqn.~(\ref{QHbc}) in Appendix~\ref{sec:QH_wavefunction}.

Reformulating the zero mode wavefunction in this way makes the subsequent discussions in Section~\ref{sec:structure_zeros} more clear.

\subsection{Spinor structure of zero mode wavefunctions}\label{sec:flatbandzerostructure}
The \ourinv implies that the two components of the (magic angle) zero mode wavefunctions are not independent.
\begin{theorem}
\label{spinor}
The zero mode wavefunction can be written as Eqn.~(\ref{zeromodewavefunction}), which we copy below,
\begin{equation}
\phi_{\bm k}(\bm r) = \left(\begin{matrix}i\mG(\bm r) \\\eta\mG(-\bm r)\end{matrix}\right)\times\Phi_{\bm k}(\bm r),\nonumber
\end{equation}
where $\eta=\pm1$ is the \ourinv eigenvalue from Eqn.~(\ref{readoffinv}) and $\Phi_{\bm k}(\bm r)$ is the quantum Hall wavefunction Eqn.~(\ref{quantumhallwf}).
\end{theorem}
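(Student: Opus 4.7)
The plan is to apply Theorem~\ref{psiminusk} to the factorized zero mode $\phi_{\bm k}(\bm r) = (\mG_1(\bm r),\mG_2(\bm r))^T\,\Phi_{\bm k}(\bm r)$ established at the end of Section~\ref{Sec:ZeroModesChiralModel}, extract a $\bm k$-independent relation between $\mG_1$ and $\mG_2$, and then fix the remaining overall phase by evaluating at an inversion-symmetric momentum.

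First I would substitute the factorized form into the gauge-fixed identity $\phi_{\bm k}(\bm r) = e^{i\zeta_{\bm k}}\tau_y\phi_{-\bm k}(-\bm r)$. Writing the two components separately, the $\pm i$ off-diagonal entries of $\tau_y$ cross-link $\mG_1(\bm r)$ with $\mG_2(-\bm r)$ and $\mG_2(\bm r)$ with $\mG_1(-\bm r)$, each multiplied by the ratio $e^{i\zeta_{\bm k}}\Phi_{-\bm k}(-\bm r)/\Phi_{\bm k}(\bm r)$. Since the left-hand sides depend on $\bm k$ only through $\Phi_{\bm k}(\bm r)$, this ratio must be a $\bm k$-independent function of $\bm r$; using the explicit form of $\Phi_{\bm k}$ in Eqn.~(\ref{quantumhallwf}) together with the oddness and quasi-periodicity of the modified Weierstrass sigma function, it collapses to a simple meromorphic factor. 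What remains is a single functional relation between $\mG_2(\bm r)$ and $\mG_1(-\bm r)$, modulo one overall constant.

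Next I would fix that constant by specializing to the \mr Gamma point, where Theorem~\ref{psiminusk} reduces to $\phi_{\bm 0}(\bm r) = \eta\,\tau_y\phi_{\bm 0}(-\bm r)$ with $\eta=\pm1$ the inversion eigenvalue of Eqn.~(\ref{readoffinv}). This pins the constant of proportionality between $\mG_2(\bm r)$ and $\mG_1(-\bm r)$ to $-i\eta$, where the $i$ originates from the off-diagonal structure of $\tau_y$. Absorbing that $i$ by defining $\mG(\bm r)\equiv -i\mG_1(\bm r)$ then produces $\mG_1(\bm r) = i\mG(\bm r)$ and $\mG_2(\bm r) = \eta\mG(-\bm r)$, which is precisely Eqn.~(\ref{zeromodewavefunction}).

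The main obstacle I anticipate is the bookkeeping of phase factors in $\Phi_{\bm k}$: the Gaussian $e^{-|z|^2/2}$, the $\bm k$-dependent exponential $e^{z_k^{*}z}$, and the sigma-function identities under $z\mapsto -z$ must conspire so that $e^{i\zeta_{\bm k}}\Phi_{-\bm k}(-\bm r)/\Phi_{\bm k}(\bm r)$ really is $\bm k$-independent; this determination is equivalent to solving for $\zeta_{\bm k}$ up to the overall $\eta$ fixed at $\bm k=\bm 0$. A useful sanity check along the way is the Wronskian-type identity $\phi_{b,\bm K}(\bm r)\phi_{b,\bm K}(-\bm r) = -\phi_{t,\bm K}(\bm r)\phi_{t,\bm K}(-\bm r)$ implied by the vanishing Fermi velocity at magic angles (Theorem~\ref{fermi-velocity}): this is exactly the consistency condition between the two component equations, and its validity specifically at magic angles is what permits the spinor to collapse onto the single function $\mG$ appearing in Theorem~\ref{spinor}.
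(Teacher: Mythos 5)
Your proposal follows essentially the same route as the paper: substitute the factorized ansatz into the gauge-fixed intra-valley inversion relation of Theorem~\ref{psiminusk}, use the inversion property $\Phi_{-\bm k}(-\bm r)=-\Phi_{\bm k}(\bm r)$ of the quantum Hall factor to force $e^{i\zeta_{\bm k}}$ to a $\bm k$-independent sign $\eta$, and absorb the residual $i$ into the definition $\mG\equiv-i\mG_1$. The only cosmetic difference is that the ratio $\Phi_{-\bm k}(-\bm r)/\Phi_{\bm k}(\bm r)$ is exactly $-1$ rather than a nontrivial meromorphic factor, which only simplifies your step.
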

\begin{proof}
We start with the ansatz:
\begin{equation}
\phi_{\bm k}(\bm r) = \left(\begin{matrix}\mG_1(\bm r)\\\mG_2(\bm r)\end{matrix}\right)\Phi_{\bm k}(\bm r).\label{psiansatz}
\end{equation}
Applying Theorem~\ref{psiminusk} yields:
\beqn
\phi_{\bm k}(\bm r) &=& e^{i\zeta_{\bm k}}\left(\begin{matrix}i\mG_2(-\bm r)\\-i\mG_1(-\bm r)\end{matrix}\right)\Phi_{\bm k}(\bm r),\label{psiansatz2}
\eeqn
where we have used the inversion property of the quantum Hall wavefunction $\Phi_{-\bm k}(-\bm r) = -\Phi_{\bm k}(\bm r)$ (derived in Appendix~\ref{sec:qhwf}). Equating Eqn.~(\ref{psiansatz}) and Eqn.~(\ref{psiansatz2}) yields,
\begin{equation}
\phi_{\bm k}(\bm r) = \left(\begin{matrix}i\mG(\bm r) \\ e^{i\zeta_{\bm k}}\mG(-\bm r)\end{matrix}\right)\Phi_{\bm k}(\bm r),\quad e^{i\zeta_{\bm k}} = \pm 1.
\end{equation}
where we defined $\mG(\bm r) \equiv -i\mG_1(\bm r)$.
\end{proof}

The boundary condition of $\mG(\bm r)$ is derived in Eqn.~(\ref{boundary-condition-G}) of Appendix~\ref{sec:QH_wavefunction}.

To conclude, following the \ourinv symmetry, we have derived the spinor structure of the zero mode wavefunction as shown in Eqn.~(\ref{zeromodewavefunction}) and have demonstrated explicitly its connection to the lowest Landau level wavefunctions. The $\eta$ in Eqn.~(\ref{zeromodewavefunction}) is the \ourinv eigenvalue, which can be read off from Eqn.~(\ref{readoffinv}).

\section{Nodal Structure}\label{sec:structure_zeros}
In the previous sections, we described an \ourinv symmetry of the chiral model, which led to the discovery of the spinor structure of the zero mode wavefunctions. There we factorized the wavefunction into a quantum Hall wavefunction and a pre-factor $\mG(\bm r)$.

However, so far the physical interpretation of the function $\mG(\bm r)$ remains mysterious, as does the structure of zeros in FIG.~\ref{plotwfnormthreeangles}. One hint is that the zero modes must be Bloch functions that transform under the usual translation group, while quantum Hall states transform under the magnetic translation group. Hence $\mG(\bm r)$ must also be quasi-periodic to ``cancel'' the magnetic translation effects of the quantum Hall wavefunction.

In this section, we resolve this puzzle by demonstrating mathematically and numerically that $\mG(\bm r)$ can be regarded as an anti-quantum Hall wavefunction at a certain Landau level, {\it i.e.} a quantum Hall state in a magnetic field oppositely directed to that of $\Phi_{\bm k}$, with the order of the magic angle serving the role of the Landau level index. In this way, the zero mode wavefunction is a product of a quantum Hall and an anti-quantum Hall state, whose net magnetic fluxes passing through the \mr unit cell cancel, allowing the whole wavefunction to be a usual Bloch function. We then discuss the zeros in more detail. In the next section, we discuss its experimental implications.

\subsection{Analytical expansion of $\mG(\bm r)$}
To demonstrate the anti-quantum Hall nature of $\mG(\bm r)$, we will start by showing that the leading order expansion near $\bm r_0$ is anti-holomorphic:
\beqn
\mG(\bm r_0 + \bm r) \sim \bar z.\label{vanishingG_r0}
\eeqn

Hence we can peel off an anti-quantum Hall wavefunction from the zero mode wavefunction and rewrite its components as Eqn.~(\ref{rhob}) and Eqn.~(\ref{rhot}).

Since $\mG(\bm r)$ is independent of \Blochkk, without loss of generality we can consider the \mr $\bm K$ point sublattice A wavefunction $\phi_{\bm K}$=$(\phi^b_{\bm K},\phi^t_{\bm K})^T$ to analyze. Its zero mode equation $\mD^{\dag}(\bm r)\phi(\bm r)$=0 implies a relation between the top and bottom components $\phi^t_{\bm K}$=$i\bar\partial\phi^b_{\bm K}/(\alpha U_{\phi})$. Theorem~\ref{r0-zero} tells us that $\phi^t_{\bm K}$ must have zeros \cite{Grisha_TBG} at $\pm\bm r_0$. From the form of the zero mode wavefunction Eqn.~(\ref{zeromodewavefunction}), we know that the $+\bm r_0$ and $-\bm r_0$ zeros of $\phi^t_{\bm K}$ come, respectively, from the quantum Hall part $\Phi_{\bm k}$ and $\mG(-\bm r)$. Therefore, near $\bm r_0$, $\phi^t_{\bm K}$ must vanish holomorphically:
\beqn
i\frac{\bar\partial\phi^b_{\bm K}(\bm r_0+\bm r)}{\alpha U_{\phi}(\bm r_0+\bm r)} \sim z.\label{expansionr0}
\eeqn
Then, by using $U_{\phi}(\bm r_0)=3$ and the $\mC_3$ symmetry, one can see that $\phi^b_{\bm K}$ must have a second order zero at $\bm r_0$, vanishing as: $\phi^b_{\bm K}(\bm r_0+\bm r)$$\sim$$z\bar z$. Again according to Eqn.~(\ref{zeromodewavefunction}), $z$ and $\bar z$ of the bottom component $\phi^b_{\bm K}$ come, respectively, from the quantum Hall wavefunction and $\mG(\bm r)$. We hence justified Eqn.~(\ref{vanishingG_r0}).

\subsection{Zero mode wavefunction revisited}\label{revisit_zeromode_eqn}
The vanishing behavior of $\mG(\bm r)$ near $\bm r_0$ shows it is possible to factorize out an anti-quantum Hall wavefunction (a quantum Hall state in a magnetic field oppositely directed to that of $\Phi_{\bm k}$, which we denote as $\bar\Phi_{\bm k}$$\equiv$$(\Phi_{\bm k})^*$) from it without encountering singularities. The \Blochk $\bm k$ of $\bar\Phi_{\bm k}$ is determined by the Bloch translation symmetry of the whole wavefunction. After some algebra, we end up with the final expression:
\begin{equation}
\phi^b_{\bm k}(\bm r) = i\rho( \bm r)\times\bar\Phi_{\bm K }(\bm r)\Phi_{\bm k}(\bm r),\label{rhob}
\end{equation}
where we introduced a function $\rho(\bm r)$ which must be cell-periodic due to the cancellation of the non-periodic parts from $\Phi_{\bm k}$ and $\bar\Phi_{\bm k}$:
\begin{equation}
\rho(\bm r)\equiv\mG(\bm r)/\bar\Phi_{\bm K}(\bm r).\label{defrho}
\end{equation}

The top layer wavefunction is obtained easily by the \ourinv symmetry:
\begin{equation}
\phi^t_{\bm k}(\bm r) = -\eta\rho(-\bm r) \times \bar\Phi_{\bm K'}(\bm r)\Phi_{\bm k}(\bm r).\label{rhot}
\end{equation}

The $\Phi_{\bm k}$ and $\bar\Phi_{\bm K}$ of Eqn.~(\ref{rhob}) carry opposite magnetic fields that cancel with each other, leaving $\phi^{b/t}_{\bm k}$ as a Bloch state. Since the crystal momentum ($\bm k$) dependence, and hence response to an external electric field, is only from the $\Phi_{\bm k}$ piece, the wavefunction $\phi_{\bm k}$ should have the same topological character as the lowest Landau level wavefunction, according to Laughlin's gauge invariance argument \cite{Laughlin_Chargepump}.

To see how this argument applies to our case more explicitly, imagine we apply a time-independent and spatially uniform in-plane external electric field $\bm E$ across the twisted bilayer graphene sample. The \Blochk of the electron couples to $\bm E$ through minimal couping, and consequently changes linearly with time: $\delta\bm k\sim\bm Et$. According to Eqn.~(\ref{defrk}), we know that the zero of $\Phi_{\bm k}$ is locked to $\bm k$, and moves in the direction perpendicular to $\bm E$. Since zero corresponds to a charge minimum, we conclude that a unit of charge is adiabatically pumped in a direction perpendicular to $\bm E$ during a unit of time. This demonstrates that the zero mode wavefunction Eqn.~(\ref{zeromodewavefunction}), as a product of a quantum Hall wavefunction and an anti-quantum Hall wavefunction, is indeed a Bloch function which carries Chern number $\mC$=$1$. So far, we discussed the sublattice-A polarized flatband wavefunction $\phi_{\bm k}(\bm r)$. The other degenerate flatband $\chi_{\bm k}(\bm r)$ is sublattice-B polarized and has Chern number $\mC$=$-1$ since these two flatbands are related by the $\mC_2\mT$ symmetry before considering the hexagonal boron nitride substrate. With the hexagonal boron nitride substrate breaking the $\mC_2$ symmetry, the two flatbands split in energy, and we expect a chiral gapless edge state connecting them. The time-reversal partner of the chiral edge would occur from the other valley with an opposite chirality. Although these two flatbands are sublattice polarized, the chiral edge mode is sublattice unpolarized since it connects two bulk bands of opposite sublattice polarization.

\subsection{Zero-structure}
\label{nodesofzeromode}
We numerically observed that there are multiple zeros occurring at each order of magic angles, as shown in FIG.~\ref{plotwfnormthreeangles}. We now demonstrate they are indeed zeros rather than numerical artifacts.

We noticed that all extra zeros occurring at higher magic angles are located at the reflection symmetric lines. The mirror symmetry $\mM_y$ constrains that $\rho(x,y)$ and $\rho^*(x,-y)$ to be the same zero mode solutions. By using the global $U(1)$ phase degree of freedom of the wavefunction, $\rho(\bm r)$ can be chosen to be a purely real function on the reflection symmetric line $y$=$0$, the red dotted line of FIG.~\ref{illustrate_zeros}. Here we parameterize this line by $\bm r$=$\lambda(\bm a_1-\bm a_2)$=$(x,0)$ with $\lambda$$\in$$[-0.5,0.5)$, and plot $\rho(\lambda)$ along it in FIG.~\ref{plotaline}. Since $\rho(\lambda)$ is cell-periodic, it must cross zero along the reflection symmetric line an even number of times.

\begin{figure}[h]
    \centering
    \begin{subfigure}[b]{0.3\textwidth}
        \includegraphics[width=\textwidth]{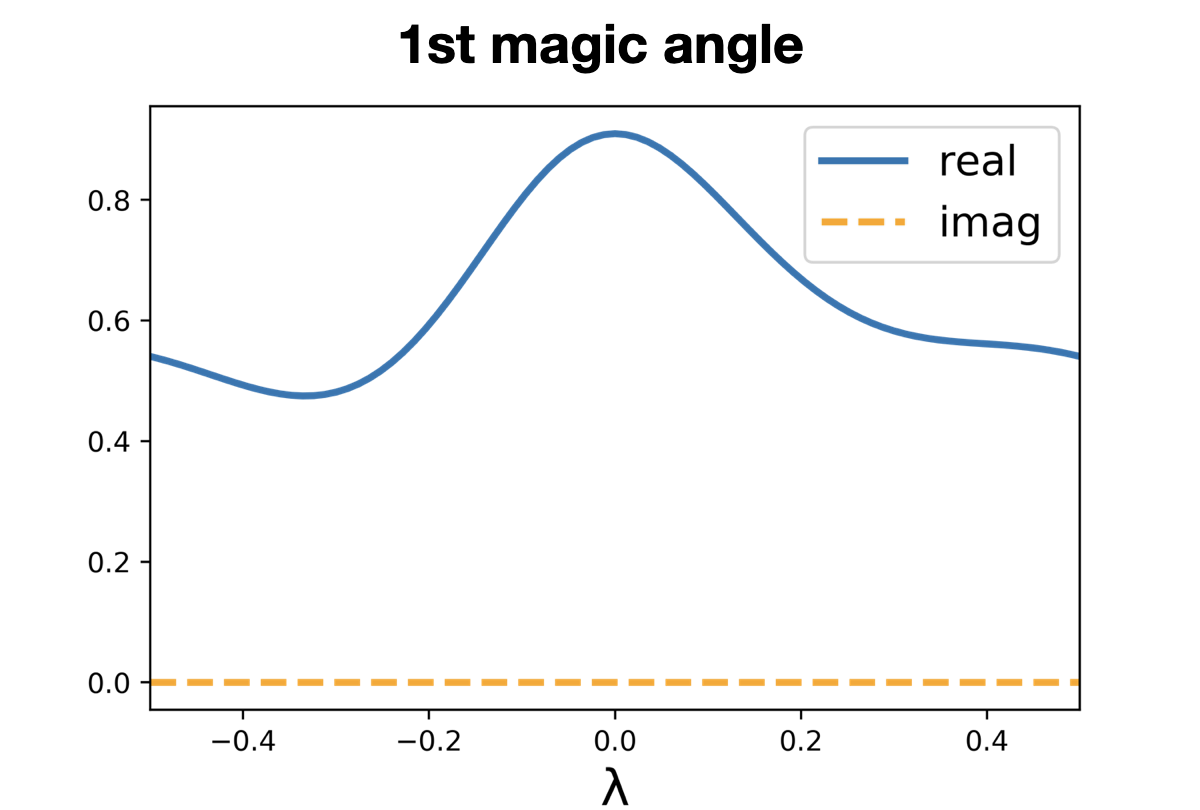}
    \end{subfigure}
    \begin{subfigure}[b]{0.3\textwidth}
        \includegraphics[width=\textwidth]{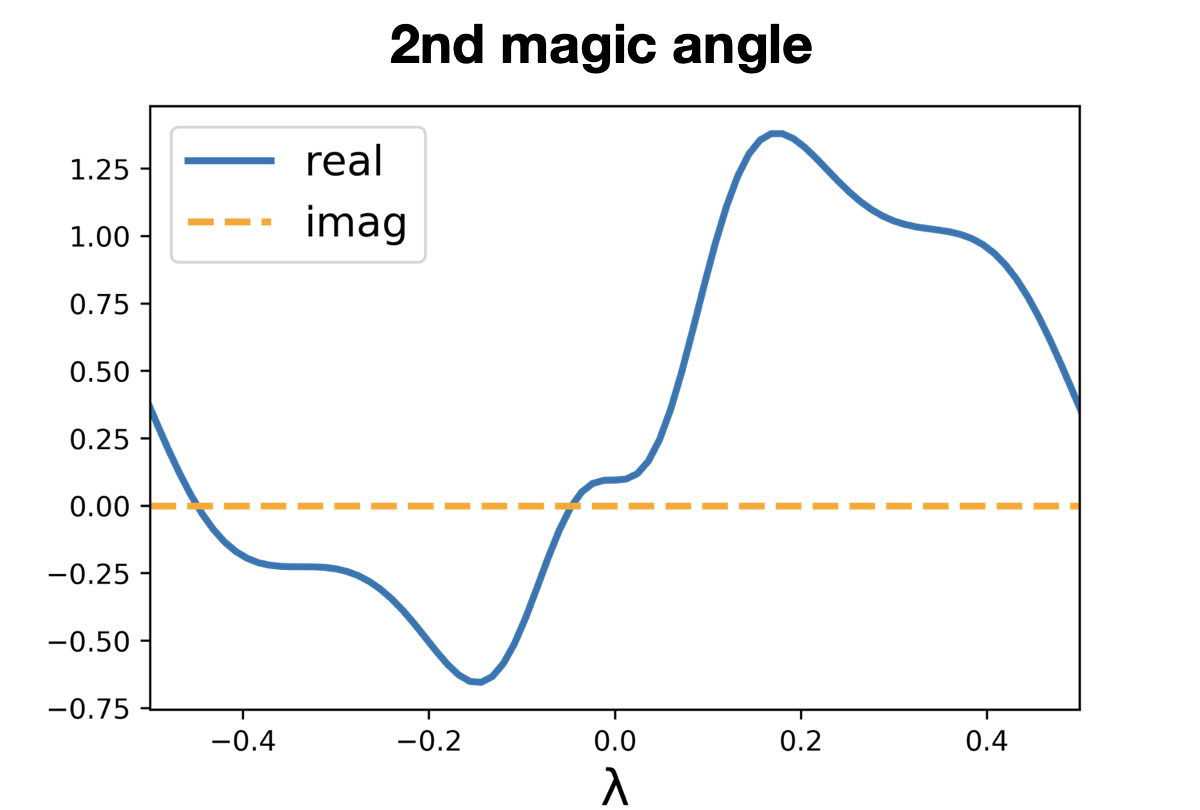}
    \end{subfigure}
    \begin{subfigure}[b]{0.3\textwidth}
        \includegraphics[width=\textwidth]{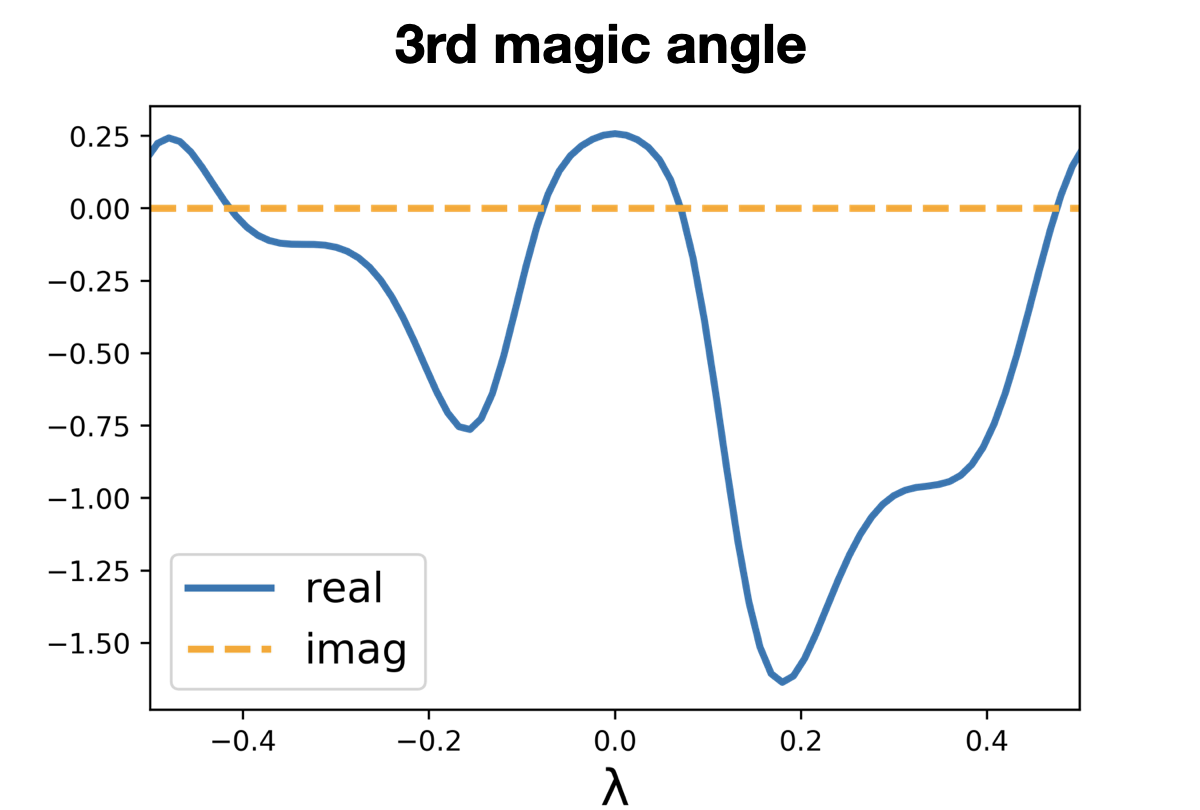}
    \end{subfigure}
    \caption{Plot of $\rho(\bm r)$ defined in Eqn.~(\ref{rhob}) along the reflection symmetric line parameterized by $(x,0)$=$\lambda(\bm a_1-\bm a_2)$ with $\lambda$$\in$$[-0.5,0.5)$. The blue solid and orange dashed lines indicate the real and imaginary part of $\rho$ respectively. It can be seen from these plots that $\rho(\lambda)$ is real, and crosses zero an even number of times.}\label{plotaline}
\end{figure}

The zeros of the zero mode wavefunctions are classified into two types by their ``movability''. One of them is a ``movable zero'' \cite{Arvas_movezero,KOHMOTO1985343,Brown_Nodal} from the quantum Hall wavefunction $\Phi_{\bm k}$, whose location moves linearly with the Bloch wavevector:
\begin{equation}
\bm r^a_{\bm k} = \bm r^a_0 + \epsilon^{ab}(\bm k-\bm K)_b.\label{loc_zk}
\end{equation}

This zero carries the external Hall response of the Chern band. There are other ``frozen zeros'' whose locations are fixed and independent of the \Blochk $\bm k$. In particular, among these frozen zeros, one of them is from the anti-quantum Hall state. In FIG.~\ref{illustrate_zeros}, we illustrate the zero-structure, where the black and blue dot represent the movable quantum Hall zero and the frozen anti-quantum Hall zeros. The yellow and red dots are frozen zeros from the function $\rho(\bm r)$.

Besides their ``movability'', zeros are also classified by their ``chirality'': the wavefunction receives a $2\pi n$ phase when the coordinate $\bm r$ encircles the zero once anticlockwise. We numerically noticed that the black and red dots are $n$=$1$ zeros, while yellow and blue are $n$=$-1$ zeros. Interestingly, as shown in FIG.~\ref{plotwfnormthreeangles} the center of the unit cells are concentrated with more and more $n$=$-1$ zeros at higher magic angles. We discuss the implication for circulating currents in the next section.

\begin{figure}[h]
    \centering
    \includegraphics[width=0.45\textwidth]{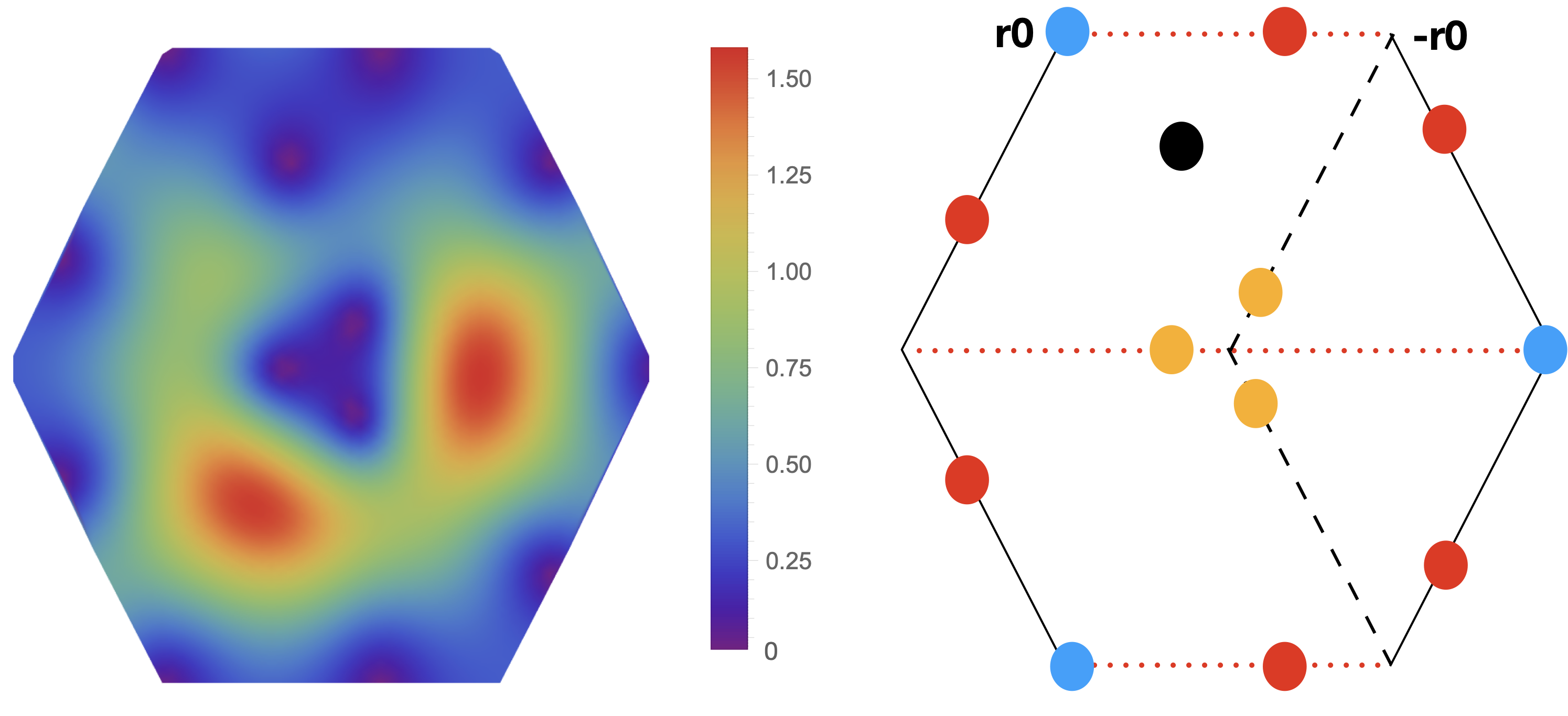}
    \caption{Illustration of zeros. Left: real space plot of the bottom component of the zero mode wavefunction $\phi^b_{\bm k}(\bm r)$ at the second magic angle. Right: sketch of the zeros of the same wavefunction. Without loss of generality, we here choose $\bm k$ to be a generic point, $\bm K$+$0.3\bm b_1$$-$$0.05\bm b_2$. Since location of the black dot (the zero of the quantum Hall wavefunction $\Phi_{\bm k}$) is locked with its Bloch momentum according to Eqn.~(\ref{loc_zk}), it called a movable zero. The blue dot is the frozen zero from the anti-quantum Hall wavefunction. The yellow and the red dots are the frozen zeros from the function $\rho(\bm r)$. We found the black and red dots are $n$=$1$ zeros, while the yellow and blue dots are $n$=$-1$ zeros, where $n$ is the $2\pi n$ phase that the wavefunction receives when the coordinate $\bm r$ encircles the zero once anticlockwise. The red dotted line is one of the three $\mC_3$ symmetry related reflection symmetric lines.}\label{illustrate_zeros}
\end{figure}

We have shown that the zero mode wavefunction shares some similarities with the simple harmonic oscillator system whose eigenstates also have alternating parity and have an increasing number of zeros. In Appendix~\ref{appE}, we provide an analytical argument why these features might persist for all higher magic angles by an analogy to the harmonic oscillator \cite{ODE_book}.

\section{Experimental Observation and Implications}
\label{Sec:ExperimentalObservation}
\subsection{Charge density and scanning tunneling probes}
One direct consequence of the zeros is a charge density deficiency that can be seen in scanning tunneling spectroscopy experiments \cite{TBG_exp_Andrei_chargeorder,Liang_CDW,Guinea_PNAS18}.

We expect a spectroscopy experiment will probe only the top (or bottom) layer, which corresponds to the components $\phi_t$ (top layer sublattice $A$ wavefunction) and $\chi_t$ (top layer sublattice $B$ wavefunction). If the spectroscopy measurement has spatial resolution on the level of the atomic spacing, then the sublattice wavefunctions can be probed separately. In this case, fixed zeros in the wavefunction components $\phi_t$ or $\chi_t$ correspond to the vanishing of charge density in real space, which will be strongly visible in the spectroscopy experiment. If the ground state is sublattice polarized, which maybe the case on a hexagonal boron nitride substrate, then such spatial resolution is not required to observe the zeros in a spectroscopy experiment. Notice that since the opposite valley wavefunction on the same sublattice is related by $\mT$, which acts trivially in real space, we expect the two opposite valley wavefunctions on the same sublattice have the same location of fixed zeros. Therefore, probing the zeros with spectroscopy does not require valley polarization.

The accumulation of zeros at the unit cell center and the unit cell boundary as magic angle order increases, as shown in FIG.~\ref{plotwfnormthreeangles}, should also be visible by spectroscopy with even less atomic resolution. This accumulation will become more prominent at higher magic angles.

Away from the chiral model, the zeros become non-zero minima in the charge density, which we have observed numerically. These will give a less sharp signature in scanning tunneling spectroscopy experiments, but will likely still be observable over some parameter regime.

\subsection{Higher Landau level physics at higher magic angles}
As we have seen from Section~\ref{sec:structure_zeros}, the $\mG(\bm r)$ piece of the flatband wavefunction in Eqn.~(\ref{zeromodewavefunction}) has an increasing number of zeros and has an analytical expansion similar to an anti-quantum Hall wavefunction. Consequently, we interpreted the zero mode wavefunction as a product of a higher Landau level anti-quantum Hall state and a lowest Landau level quantum Hall state (Eqn.~(\ref{rhob})), where the Landau level index of the former is determined by the order of the magic angle. We also discussed in Section~\ref{revisit_zeromode_eqn} that the topological properties of the flatbands are determined by the lowest Landau level quantum Hall piece $\Phi_{\bm k}$ since $\mG(\bm r)$ does not have \Blochk dependence.

Nevertheless, we expect the effective interactions projected into the flatbands are modified strongly by both $\mG(\bm r)$ and $\Phi_{\bm k}$. In particular, the nodal structure of $\mG(\bm r)$ directly impacts the charge density, which determines the projected Coulomb interaction. In the quantum Hall problem, the nodal structure of the higher Landau levels results in arrangement of charge that ultimately stabilizes various states \cite{Fogler_CDW,RezayiHaldaneCFL,KunYang_WignerCrystal,PhysRevB.93.201303} such as charge density waves, bubble phases, and other many-body topological phases (for instance, the non-Abelian Moore-Read phase \cite{MoreReadState}) that are absent in the lowest Landau level. By analogy, we might expect a different set of interacting phases to be stabilized at higher magic angles than at the first magic angle. Our formulation provides a theoretical and computational pathway towards analyzing interacting physics at different magic angles.

\subsection{Local current and magnetization at higher magic angles}
From the charge density of the zero mode wavefunction as plotted in FIG.~\ref{plotwfnormthreeangles}, we observe that for the first magic angle, the charge density maximum occurs at the unit cell center, {\it i.e.} the \emph{AA} stacking point. At higher magic angles, we see an increasing number of zeros appearing at this region. Interestingly, all these zeros are of the same chirality for both layers, while zeros of the opposite chirality are pushed to the boundary of the unit cell. This indicates a stronger phase winding effect and hence circulating currents near the \emph{AA} stacking region at higher magic angles, which could be experimentally observable nearby the chiral limit.

To see the circulation currents, we first define the following intra-sublattice intra-layer ``current operator'' $\bm J^l_{ss}$ for sublattice $s$ and layer $l$. The operator $\bm J^l_{AA}$ is defined as:
\begin{equation}
\bm J^l_{AA}(\bm r) \equiv i(t')[(\bm\nabla\phi_{l})^*\phi_{l}-\phi_{l}^{*}(\bm\nabla\phi_{l})](\bm r).\label{JAA_operator}
\end{equation}
operator $\bm J^l_{BB}(\bm r)$ is defined in a similar manner but with $\phi$ replaced by $\chi$. Here $t'$ is the microscopic parameter representing the next-nearest-neighbor hopping strength. We call the above a ``current operator'' in quot because $\bm J^l_{ss}$ is not the current operator of the chiral model, which by definition should be proportional to $\partial_{\bm k}H_{\bm k}$, and hence couples distinct sublattices and vanishes within one sublattice. Since the exact flatband wavefunctions are fully sublattice polarized (corresponding physically to hexagonal boron nitride splitting the sublattice degeneracy), the current operator $\partial_{\bm k}H_{\bm K}$ vanishes within one sublattice polarized state. Nevertheless, we argue that our current operator $\bm J^l_{ss}$ has a microscopic origin, and hence should be a physical current operator. The $\bm J^l_{ss}$ can be regarded as a continuum version of lattice current $i(a_{s,i}^{\dag}a_{s,j}-a_{s,j}^{\dag}a_{s,i})$ induced from the next-nearest-neighbor hopping process in graphene, where the $a_{s,i}$ are graphene's electron annihilation operators and $i,j$ labels graphene's next-nearest-neighbor sites.
\begin{figure}[h]
    \centering
    \includegraphics[width=0.5\textwidth]{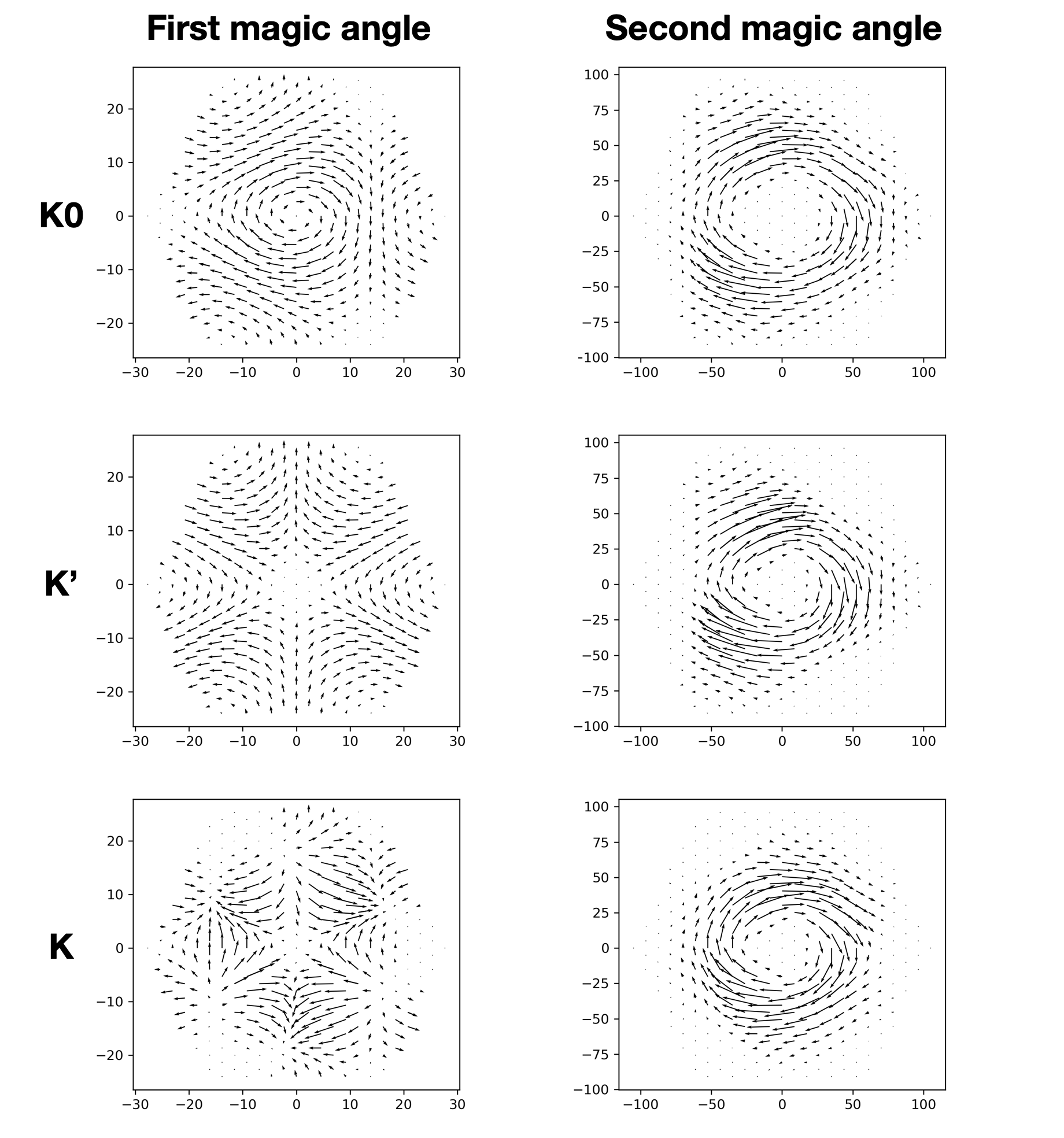}
    \caption{Expectation value of the bottom layer sublattice $A$  current operator $\bm J^b_{AA}$ at the \mr Gamma point $\bm K_0$ and \mr Dirac points $\bm K'$, $\bm K$ at the first two magic angles. The chirality of the current operator would be opposite on the B sublattice, and the chirality is the same for both layers within one sublattice.}\label{loopcurrent}
\end{figure}

In FIG.~\ref{loopcurrent}, we plot the real space distribution of $\bm J^b_{AA}(\bm r)$, calculated from the bottom layer sublattice-$A$ wavefunction $\phi_b$ at three different Bloch momenta. According to Ref.~(\onlinecite{Law_TBG_OM}), the orbital magnetization is dominated by the Gamma point $\bm K_0$ and Dirac point $\bm K'$ in a single valley model, since the bands hybridize most strongly with other bands at these points. Given the strong circulating current present at the second magic angle, it is reasonable to speculate a stronger orbital magnetization \cite{Vanderbilt_OM_periodic,Vanderbilt_OM_multiband,Vanderbilt_StructuralResponse,Di_BerryDOS,Niu_QuantumTheoryOM,Di_Review,om_review} at higher magic angles than the magnetization at the first magic angle \cite{XiDai_OM,Law_TBG_OM,experiment_om_tbg} for cases close to the chiral limit. Note that the circulation currents are odd under time reversal or a sublattice transformation, hence a strong experimental signal requires valley and sublattice polarization. We leave a detailed exploration of higher magic angle orbital magnetization with more realistic parameters as future work.

\section{Conclusion}\label{sec:conclusion}
In this work, we studied the chiral model of twisted bilayer graphene introduced in Ref.~(\onlinecite{Grisha_TBG}). We pointed out the intrinsic \ourinv symmetry of the chiral model, protected by the $\mC_2\mT$ crystal symmetry and the linearized Dirac fermion. As a consequence, the energy spectrum is inversion symmetric at all twist angles. Furthermore, zero modes occurring at different magic angles are distinguished by their \ourinv eigenvalue. We numerically found a correspondence of the zero mode inversion parity and the order of magic angles and speculated such an alternating pattern would hold for all magic angles.

We also pointed out the intricate relation between the zero mode wavefunction and the quantum Hall wavefunctions. As guaranteed by \ourinv symmetry, the zero mode wavefunction has an internal spinor structure, and in fact each component can be regarded as a product of a quantum Hall and an anti-quantum Hall wavefunction, which guarantees the zero mode has the periodicity of a Bloch wavefunction. Interestingly, there are an increasing number of zeros occurring in each component at higher magic angles.

In the end, we discussed the implications of our results to realistic systems and observable phenomena. First, these zeros can be detected as charge minima in real space by scanning tunneling spectroscopy. Second, the increasing number of zeros present in the zero mode wavefunction resembles the increasing number of zeros present in higher Landau level wavefunctions. Motivated by this observation, we anticipate higher Landau level physics will occur at the second and higher magic angles. Moreover, we noticed the phase circulation of the flatband wavefunctions at higher magic angles, and anticipate phenomena related to magnetization. We leave more detailed studies on higher magic angles as future work. We notice an earlier work on low twist angle physics in Ref.~(\onlinecite{zzy_naturecom20}).

Last but not least, it is well known that on a compact manifold, a $U(1)$ magnetic field is subject to a Dirac quantization condition \cite{Aharonov_Casher79}. Our identification of zero mode wavefunctions with two quantum Hall wavefunctions may also shed light on the non-Abelian quantization condition \cite{Non_Abelian_Graphene,chiral_vortexlattice,Kailasvuori_2009}, where a semi-classical analysis was done recently in Ref.~(\onlinecite{RafeiRen_TBG}).

\emph{Note added}: during the final stage of the manuscript, we noticed the ``unitary particle-hole'' symmetry occurring in Ref.~(\onlinecite{song2020tbg}) which is similar but distinct from our \ourinv symmetry in Eqn.~(\ref{involution}); we contrast the difference between the two in the paragraphs under Eqn.~(\ref{Hpseudoinv}). We also noticed a relevant work on the chiral model, Ref.~(\onlinecite{popov2020hidden}), that appeared recently.

\section*{Acknowledgments}
The Flatiron Institute is a division of the Simons Foundation. This work was partially supported by the Air Force Office of Scientific Research under Grant No.~FA9550-20-1-0260 (J.C.).

\section*{Appendices}
\appendix
\section{Model Hamiltonians and Unitary Transformations}\label{Utransformations}
\subsection{Lattices}
We start with setting up the notation of \mr lattice. As mentioned in the main text, we denote the two dimensional lattice vectors as $\bm a_{i=1,2}^{a=x,y}$. The area of unit cell is defined to be $2\pi S$:
\begin{equation}
2\pi S \equiv |\bm{a}_1\times\bm{a}_2| = |\epsilon_{ab}\bm{a}_1^a\bm{a}_2^b|.\label{area}
\end{equation}
where $\epsilon_{xy}$=$-\epsilon_{yx}$=1 is the anti-symmetric symbol. The reciprocal basis vectors are:
\beqn
\bm{b}^i_a = \epsilon^{ij}\epsilon_{ab}\bm{a}_j^b/S.\label{def_b}
\eeqn

As will be shown in Appendix~\ref{sec:QH_wavefunction}, $\sqrt S$ defines an effective magnetic length. We set $\sqrt S$=$1$ throughout this work.

Graphene contains $A$ and $B$ sites. As shown in FIG.~\ref{mBZ} of the main text, the Dirac points $\bm K/\bm K'$ and $A/B$ sites are located at,
\beqn
\bm K &=& \frac{-2\bm{b}_1+\bm{b}_2}{3},\quad \bm K' = \frac{2\bm{b}_2-\bm{b}_1}{3}.\nonumber\\
\bm r_A &=& \frac{\bm a_1+2\bm a_2}{3},\quad \bm r_B = \frac{2\bm a_1+\bm a_2}{3}.\label{def_r0_K}
\eeqn
We use $\bm r_0$ for $\bm r_A$ throughout this work.

\subsection{Unitary transformations}
In Section~\ref{sec:TBGreview}, we described the Bistritzer-MacDonald Hamiltonian Eqn.~(\ref{HTBG}) and the chiral model Eqn.~(\ref{chiral-form}) of a single valley. They are written in the non-chiral $\Psi_{BM}$ and the chiral basis $\Psi_c$ respectively, see Eqn.~(\ref{macdonaldbasis}) and Eqn.~(\ref{chiralbasis}). In this section, following Ref.~(\onlinecite{Grisha_TBG}), we work out the details of the unitary transformation between the two bases. We start with the continuum model Eqn.~(\ref{HTBG}), and perform a gauge transformation to remove the momentum shift on the diagonal. The Hamiltonian is transformed to be:
\begin{eqnarray}
H_{BM} &=& M_T\left(\begin{matrix}-iv_0\bm{\sigma}_{+\theta/2}\cdot\bm{\nabla}&T(\bm r)\\T^{\dag}(\bm r)&-iv_0\bm{\sigma}_{-\theta/2}\cdot\bm{\nabla}\end{matrix}\right)M_T^{\dag},\nonumber\\
M_T &=& \diag(e^{i\bm K_+^b\cdot\bm r},~e^{i\bm K_+^t\cdot\bm r}).
\end{eqnarray}
where $T(\bm r)$ is given in Eqn.~(\ref{T}). Then, we remove the diagonal $\theta$ dependence by rotation:
\begin{eqnarray}
H_{BM} &=& (M_TM_{\theta})H_{cBM}(M_TM_{\theta})^{\dag}.
\end{eqnarray}
where
\begin{eqnarray}
H_{cBM} &=& \left(\begin{matrix}-iv_0\bm{\sigma}\cdot\bm{\nabla}&T(\bm r)\\T^{\dag}(\bm r)&-iv_0\bm{\sigma}\cdot\bm{\nabla}\end{matrix}\right),\nonumber\\
M_{\theta} &=& \diag(e^{-\frac{i\theta}{4}\sigma_z},~e^{\frac{i\theta}{4}\sigma_z}).
\end{eqnarray}

The matrix $H_{cBM}$ is the chiral Hamiltonian organized in basis $(\phi_b,\chi_b,\phi_t,\chi_t)^T$ where $\phi$ and $\chi$ represent the $A$ and $B$ sublattice respectively, and $b/t$ represent the bottom and top layer components. More explicitly,
\beqn
H_{cBM} = \sqrt2v_0\left(\begin{matrix}0&-i\partial&0&\alpha U_{-\phi}\\-i\bar{\partial}&0&\alpha U_{\phi}&0\\0&\alpha U^*_{\phi}&0&-i\partial\\\alpha U^*_{-\phi}&0&-i\bar{\partial}&0\end{matrix}\right).
\eeqn
where $U_{\phi}$ is defined in Eqn.~(\ref{Uphi}). Transforming into the chiral basis Eqn.~(\ref{chiralbasis}), we obtain Eqn.~(\ref{chiral-form}):
\begin{eqnarray}
H_{cBM} &=& v_0\left(\begin{matrix}0&\mathcal{D}\\\mathcal{D}^{\dag}&0\end{matrix}\right),\\
\mathcal{D}^{\dag} &=& \sqrt2\left(\begin{matrix}-i\bar{\partial}&\alpha U_{\phi}\\\alpha U^*_{-\phi}&-i\bar{\partial}\end{matrix}\right),\quad\mathcal{D} = \sqrt2\left(\begin{matrix}-i\partial&\alpha U_{-\phi}\\\alpha U^*_{\phi}&-i\partial\end{matrix}\right).\nonumber
\end{eqnarray}

The unitary transformation Eqn.~(\ref{unitarytrans}) can be easily worked out from matrices $M_T$, $M_{\theta}$ and the basis shuffling. As defined in the main text, we denote the rotated graphene Dirac points as $\bm K_+^{b/t}$, and denote the \mr Dirac points as $\bm K$=$\bm K_+^b$$-$$\bm K_+^\Gamma$, $\bm K'$=$\bm K_+^t$$-$$\bm K_+^\Gamma$ where $\bm K_+^\Gamma$ is the \mr Brillouin zone center. We have also shifted the \Blochk of the chiral basis Eqn.~(\ref{chiralbasis}) to center at the \mr Gamma point. Its Bloch translation symmetry can be also worked out easily as shown in Eqn.~(\ref{bc_chiralbasis}).

\section{How $\mC_2\mT$ Symmetry Constrains the Chiral Hamiltonian}
\label{sec:append_sym}
We have written the inter-layer coupling matrix in real space as Eqn.~(\ref{HTBG}). We now discuss the action of $\mC_2\mT$, which complex conjugates and exchanges the two sublattices. The diagonal blocks in Eqn.~(\ref{HTBG}) are invariant under this transformation. We now consider the off-diagonal tunneling terms $H_{BM}^{\rm tun}$. Its transformation under $\mC_2\mT$ reads,
\beqn
H_{BM}^{\rm tun} &\xrightarrow{\mC_2\mT}& \int d\bm{r} \Psi^\dagger(-\bm{r})  \begin{pmatrix} 0 & \sigma_x T(\bm{r})\sigma_x \\ \sigma_x T^\dagger(\bm{r})\sigma_x & 0 \end{pmatrix}^*  \Psi(-\bm{r}) \nonumber\\
&=&  \int d\bm{r} \Psi^\dagger(\bm{r})  \begin{pmatrix} 0 & \sigma_x T(-\bm{r})\sigma_x \\ \sigma_x T^\dagger(-\bm{r})\sigma_x & 0 \end{pmatrix}^*  \Psi(\bm{r}),\nonumber
\eeqn
where, same as the main text, the Pauli matrices $\bm\sigma$ act on sublattice space. By virtue of being invariant under $\mC_2\mT$, it follows that:
\begin{equation}
T(\bm{r}) = \sigma_x T^*(-\bm{r})\sigma_x,
\end{equation}
or, element by element:
\begin{equation}
T_{AA}(\bm{r}) = T_{BB}^*(-\bm{r}), \,\, T_{AB}(\bm{r}) = T_{BA}^*(-\bm{r}).
\label{eq:C2TonTels}
\end{equation}

If we rotate to the chiral basis $\Psi_c$ Eqn.~(\ref{chiralbasis}), the tunneling terms enter in the following way:
\begin{equation}
\mathcal{H}_{cBM}(\bm{r}) = \begin{pmatrix} 
T^{\rm diag}_{A}(\bm{r}) & \mathcal{D}(\bm{r}) \\ \mathcal{D}^\dagger(\bm{r}) & T^{\rm diag}_{B}(\bm{r})
\end{pmatrix},
\end{equation}
where the diagonal blocks are:
\beqn
T^{\rm diag}_{A}(\bm r) &=& \left(\begin{matrix}&T_{AA}(\bm r)\\T_{AA}^*(\bm r)&\end{matrix}\right),\nonumber\\
T^{\rm diag}_{B}(\bm r) &=& \left(\begin{matrix}&T_{BB}(\bm r)\\T_{BB}^*(\bm r)&\end{matrix}\right).
\eeqn
and the off-diagonal blocks are:
\beqn
\mathcal{D}(\bm{r}) &=& \begin{pmatrix} -\sqrt2i\partial & T_{BA}^*(\bm{r}) \\ T_{AB}(\bm{r}) & -\sqrt2i\partial \end{pmatrix},\nonumber\\
\mathcal{D}^\dagger(\bm{r}) &=& \begin{pmatrix} -\sqrt2i\bar{\partial} & T_{AB}^*(\bm{r}) \\ T_{BA}(\bm{r}) & -\sqrt2i\bar{\partial} \end{pmatrix}.
\eeqn

Using the action of $\mC_2\mT$ in Eqn.~(\ref{eq:C2TonTels}), these can be written in terms of only one complex parameter $T_{AB}$:
\beqn
\mathcal{D}(\bm{r}) &=& \begin{pmatrix} -i\sqrt2\partial & T_{AB}(-\bm{r}) \\ T_{AB}(\bm{r}) & -\sqrt2i\partial \end{pmatrix},\nonumber\\
\mathcal{D}^\dagger(\bm{r}) &=& \begin{pmatrix} -\sqrt2i\bar{\partial} & T_{AB}^*(\bm{r}) \\ T_{AB}^*(-\bm{r}) & -\sqrt2i\bar{\partial}\end{pmatrix},\label{eq:C2TonD}
\eeqn
which satisfy our Lemma~\ref{lemmaDsigmay}:
\begin{eqnarray}
\tau_y\mD^{\dag}(\bm r)\tau_y &=& -\mD^{\dag}(-\bm r),\nonumber\\
\tau_y \mD(\bm{r})\tau_y &=& -\mD(-\bm{r}),\label{eq:sigmayrelation}
\end{eqnarray}
where the Pauli matrices $\tau$ act on the layer index.

In this basis, the chiral matrix $\sigma_z$ that anti-commutes with the Hamiltonian enforces $T_{AA}(\bm r)$=$T_{BB}(\bm r)$=$0$. Notice that Eqn.~(\ref{eq:sigmayrelation}) also requires linearized Dirac fermion; a quadratic term in the dispersion destroys it. Note that a quadratic term in the dispersion also destroys the exact flatband of the chiral model. We hence demonstrated that for chiral models with linearized Dirac fermion, \ourinv follows from $\mC_2\mT$ symmetry.

\section{Quantum Hall Wavefunction}\label{sec:QH_wavefunction}
In this section we review the quantum Hall wavefunction that is frequently used in the main text. We start with discussing magnetic translation symmetry and quasi-periodic elliptic functions.

\subsection{Magnetic translation symmetry}
Since the lowest Landau level wavefunctions are usually written in terms of holomorphic functions, we start by setting up a notation for complex coordinates. Complex structures $\omega_{a=x,y}$ and $\omega_{a=x,y}^*$ define a one-to-one mapping from two dimensional affine space to the complex plane. We represent the metric and the anti-symmetric tensor as
\beqn
g_{ab} &=& \omega^*_a\omega_b+\omega_a\omega_b^*,\nonumber\\
i\epsilon_{ab} &=& \omega^*_a\omega_b-\omega_a\omega_b^*.\label{def_g_omega}
\eeqn

They have the properties: $\omega^a=g^{ab}\omega_b$, $\omega^a\omega_a=0$, and $\omega^a\omega_a^*=1$. The complex vectors are defined by contracting complex structure with vectors $A$$\equiv$$\omega_a\bm A^a$, and complex co-vectors as $B$$\equiv$$\omega^a\bm B_a$. To distinguish with vectors, complex vectors are unbold. In terms of complex coordinates, the inner product and cross product are respectively $\bm A$$\cdot$$\bm B$$\equiv$$\bm A_a\bm B^a$=$AB^*$+$A^*B$, $\bm A$$\times$$\bm B$$\equiv$$\epsilon_{ab}\bm A^a\bm B^b$=$-i(A^*B-AB^*)$. In this work, we took $\omega_x=1/\sqrt2$ and $\omega_y=i/\sqrt2$.

The quantum Hall system describes two dimensional interacting or noninteracting electrons in a perpendicular magnetic field. In a magnetic field, the electron's coordinate is factorized into the center of its cyclotron motion {\it i.e.} guiding center $\bm R$, and the radius {\it i.e.} Landau orbits $\bm{\bar R}$:
\begin{equation}
\bm r = \bm R + \bm{\bar R},
\end{equation}
where $\bm R$ commutes with $\bm{\bar R}$, but their individual components are noncommutative:
\beqn
~[\bm R^a,\bm R^b] = -i\epsilon^{ab}l_B^{2},\quad[\bar{\bm R}^a,\bar{\bm R}^b] = i\epsilon^{ab}l_B^{2}.
\eeqn

In our case the area of unit cell $S$ plays the same role as magnetic length squared $l_B^2$=$\hbar/|eB|$ where $e,B$ are electron charge and magnetic field strength. When projected into a single Landau level, an electron is fully described by the noncommutative $\bm R$ degrees of freedom. The magnetic translation operator is defined as the following one,
\beqn
t(\bm d) \equiv \exp(i\bm d\times\bm R),
\eeqn
which translates the guiding center $\bm R$ by distance $\bm d$. The magnetic translation algebra is,
\beqn
t(\bm d_1)t(\bm d_2) &=& t(\bm d_2)t(\bm d_1)e^{i\bm d_1\times\bm d_2} = t(\bm d_1+\bm d_2)e^{\frac{i}{2}\bm d_1\times\bm d_2}.\nonumber
\eeqn

Due to the single value of wavefunction, any legal wavefunction must transform back to itself after a periodic translation. So we have the boundary condition,
\begin{equation}
t(\bm a)\psi = e^{i\phi_{\bm a}}\psi,\label{quantumhallbc}
\end{equation}
where $\bm a\in\mathbb{A}$ is a lattice vector. From now on we define the whole lattice as $\mathbb{A}$$\equiv$$\{m\bm a_1+n\bm a_2|m,n\in\mathbb{Z}\}$. The phase factor $\phi_{\bm a}$ effectively measures the fraction of flux inside the torus. The wavefunctions that satisfy Eqn.~(\ref{quantumhallbc}) are written in terms of elliptic functions. One choice of elliptic function is the Jacobi theta function \cite{haldanetorus1}. Recently it was also found that the ``modified Weierstrass sigma function'' is another choice \cite{haldanemodularinv,Jie_MonteCarlo}. Compared with Jacobi theta function, Weierstrass sigma function has the advantage of being modular invariant.

\subsection{Modified Weierstrass sigma function}
The modified Weierstrass sigma function \cite{haldanemodularinv,Jie_MonteCarlo} $\sigma(z)$ is defined as:
\beqn
\sigma(z) &=& \tilde\sigma(z)e^{-\frac{1}{2}\bar G(\mathbb A) z^2},\label{defofsigma}
\eeqn
{\it i.e.} a product of the standard Weierstrass sigma function $\tilde\sigma(z)$ and a holomorphic factor $e^{-\frac{1}{2}\bar G(\mathbb A) z^2}$, where as will be explained soon the ``almost modular form'' $\bar G(\mathbb A)$ is a modular independent $c-$number constant that vanishes for square and hexagonal torus. We now introduce the $\bar G(\mathbb{A})$, and discuss the quasi-periodic property of $\sigma(z)$.

The standard Weierstrass sigma function $\tilde\sigma(z)$ has a product series expansion (which is also a fast converging form for numerics),
\beqn
\tilde\sigma(z) \equiv z\prod_{a\in A_{mn}\backslash\{0\}}\left(1-\frac{z}{a}\right)e^{\frac{z}{a}+\frac{1}{2}\frac{z^{2}}{a^{2}}},
\eeqn
where as defined above, $\mathbb{A}$ means the set of lattice points. Clearly, it is modular invariant. It is also quasi-periodic,
\beqn
\tilde\sigma(z+a_i) = - e^{2\tilde\eta_i(z+a_i/2)}\tilde\sigma(z),\quad i=1,2,\nonumber
\eeqn
where $\tilde\eta_i$ is the standard zeta function evaluated at half period, which is related to the $k$=1 Eisenstein series $G_2(a_i)$, $i=1,2$,
\beqn
\tilde\eta_i &=& G_2(a_i)a_i/2.\label{eta-def}
\eeqn

The Eisenstein series $G_2(a_i)$ has a highly convergent formula,
\beqn
G_2(a_i) &=& \frac{2\pi^2}{a_i^2}\left(\frac 1 6 + \sum_{n=1}^{\infty}\frac{1}{\sin^2(n\pi\frac{a_{j\neq i}}{a_i})}\right).
\eeqn

The $\tilde\eta_i$ in addition obey a relation that defines chirality,
\beqn
\tilde\eta_1a_2 - \tilde\eta_2a_1 = \frac{1}{2N_{\phi}}(a_1^*a_2 - a_1a_2^*) = i \pi. \label{chirality}
\eeqn

In our case, the magnetic flux quanta of a unit cell is one, so $N_{\phi}$=1. The (\ref{eta-def}) and (\ref{chirality}) suggests a modular independent quantity called ``almost modular form'',
\beqn
\bar G(\mathbb A) \equiv G_2(a_i) - \frac{1}{N_{\phi}}\frac{a_i^*}{a_i}.
\eeqn

With these formulas in hand, we are ready to get the quasi-periodicity of $\sigma(z)$:
\begin{equation}
\sigma(z+a_i) = -e^{a_i^*(z+a_i/2)}\sigma(z),\quad i=1,2.
\end{equation}

Last but not least, the sigma function is odd under spatial inversion: $\sigma(-z)=-\sigma(z)$.

\subsection{Quantum Hall wavefunction}\label{sec:qhwf}
The quantum Hall wavefunction is given in Eqn.~(\ref{quantumhallwf}), which we copy below:
\begin{equation}
\Phi_{\bm k}(\bm r) = e^{z_k^*z}\sigma(z-z_k)e^{-\frac12|z_k|^2}e^{-\frac12|z|^2}.\nonumber
\end{equation}

It has a single zero located at $\bm r^a_{\bm k}$=$\bm r^a_0 + \epsilon^{ab}(\bm k-\bm K)_b$ in each unit cell, with $\bm r_0$ defined in Eqn.~(\ref{def_r0}). Mapping to the complex plane, the zero occurs at $z_k$ and its translated counterparts, where $z_k$ is:
\beqn
z_k &=& \omega_a(\bm r_0^a-\epsilon^{ab}\bm K_b) + \omega_a\epsilon^{ab}\bm k_b = -ik,\label{comp_zk}
\eeqn
where the first term is zero following from Eqn.~(\ref{def_b}) and Eqn.~(\ref{def_r0_K}). We used Eqn.~(\ref{def_g_omega}) to derive the second term.

Since $\Phi_{\bm k}$ is not a Bloch function, the ``Bloch vector'' $\bm k$ here should be understood as labeling the magnetic translation boundary condition Eqn.~(\ref{quantumhallbc}): $t(\bm a_{1,2})\Phi_{\bm k}$=$-e^{i\bm k\cdot\bm a_{1,2}}\Phi_{\bm k}$. For a quantum Hall wavefunction, its zero moves linearly with the boundary condition $\bm k$, reflecting the fact of Chern number $\mC$=$1$ \cite{Arvas_movezero}. The following diagram FIG.~\ref{mapingrk} is helpful to quickly figure out $\bm r_{\bm k}$ given the \Blochk $\bm k$.
\begin{figure}[h]
    \centering
    \includegraphics[width=0.3\textwidth]{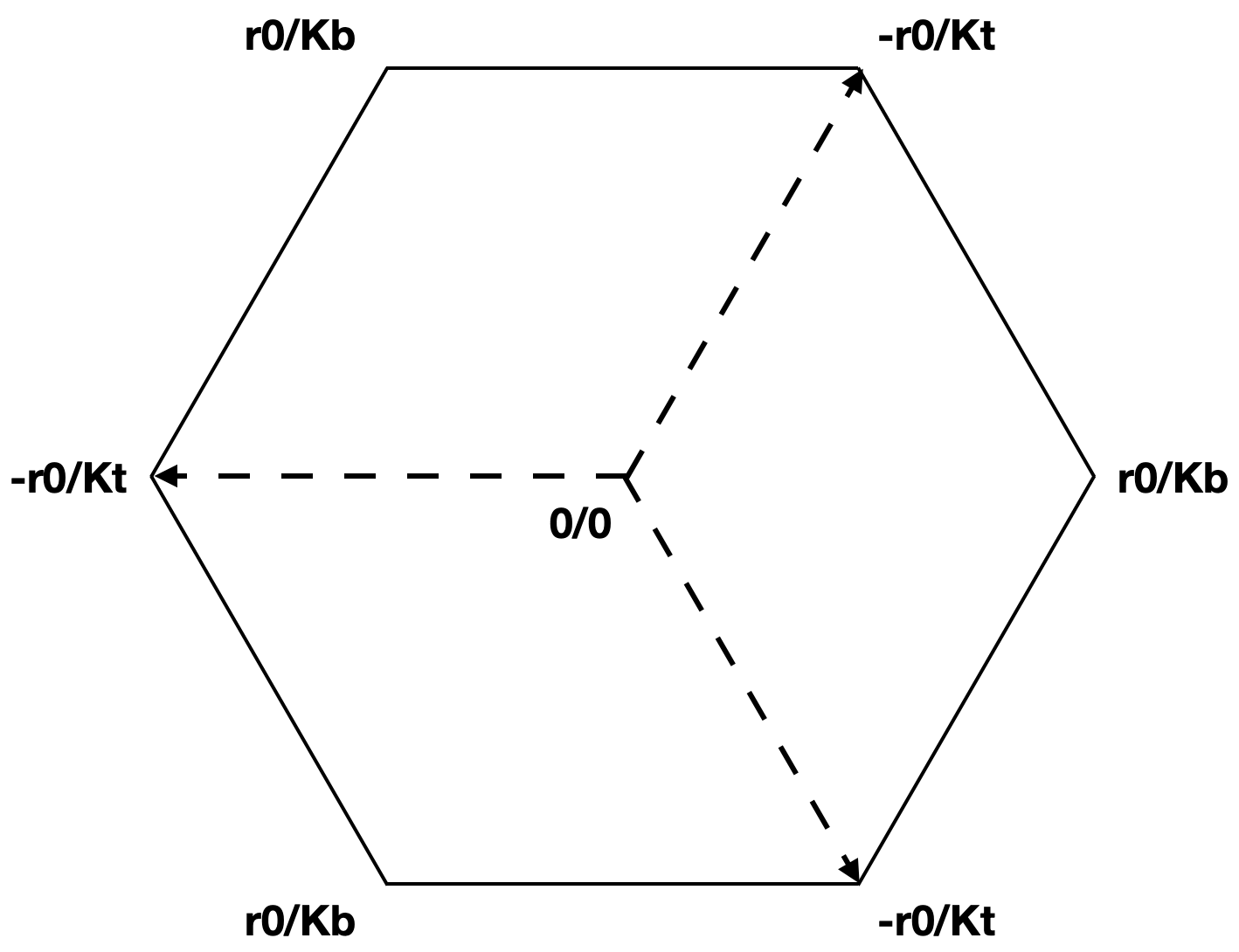}
    \caption{The one-to-one mapping between $\bm k$ and $\bm r_{\bm k}$, where the first and second letter are $\bm r_{\bm k}$ and $\bm k$ respectively. The figure is constructed by rotating the \mr Brillouin zone by 90 degrees and overlaps with the real space unit cell, precisely because of the mathematical relation $\bm r_{\bm k}^a$=$\bm r_0^a$+$\epsilon^{ab}(\bm k-\bm K)_b$. All points in the diagram are illustrated modulo lattice vectors.}\label{mapingrk}
\end{figure}

Using results derived in the last section, it is easy to find the quasi-periodic boundary condition $\Phi_{\bm k}$ satisfies in real and reciprocal space. With $i$=1,2, they are:
\beqn
\Phi_{\bm k}(\bm r+\bm a_i) &=& -e^{\frac{i}{2}\bm a_i\times\bm r}e^{i\bm r_{\bm k}\times\bm a_i}\Phi_{\bm k}(\bm r).\label{QHbc}\\
\Phi_{\bm k+\bm b_i}(\bm r) &=& -e^{\frac{i}{2}\bm b_i\cdot\bm r_{\bm k}}\Phi_{\bm k}(\bm r).\label{QHkbc}
\eeqn

The nontrivial phase factors above cannot be removed by a smooth, global, gauge transformation, which reflects the fact that $\Phi_{\bm k}$ has a nontrivial Chern number. Technically, these boundary conditions allow one to restrict the discussion to the unit cell and the first Brillouin zone. From now on, we denote $\bm k$ as \Blochk inside the first Brillouin zone.

It is straightforward to see how inversion acts on quantum Hall wavefunctions from Eqn.~(\ref{comp_zk}):
\begin{equation}
\Phi_{\bm k}(\bm r) = -\Phi_{-\bm k}(-\bm r).\label{QHinv}
\end{equation}

We finish this section by showing the boundary condition of $\mG(\bm r)$, which follows straightforwardly from the periodicity of the zero mode wavefunction in Eqn.~(\ref{bc_chiralbasis}) and the quantum Hall wavefunction in Eqn.~(\ref{QHbc}):
\begin{equation}
\mathcal{G}(\bm r+\bm a_{i=1,2}) = -\mathcal{G}(\bm r)\times e^{-\frac{i}{2}\bm a_i\times\bm r}e^{i\bm q_0\cdot\bm a_i}.\label{boundary-condition-G}
\end{equation}

\section{Analytical Argument for the Nodal Structure}\label{appE}
The alternating parities and the increasing number of zeros we observed in the chiral model shares many similarities as the simple harmonic oscillator. In this section, we provide an argument for the zero-structure and inversion patterns by making an analogy to simple one-dimensional harmonic oscillators. Specifically, since the additional zeros that occur at higher magic angles occur along a reflection symmetric line, we reduce the zero mode equation to a one variable ordinary differential equation on that line. We can then compare to a harmonic oscillator in one dimension.

The one-dimensional harmonic oscillator is described by the Hamiltonian:
\begin{equation}
H = \frac{\hat{p}^2}{2m} + \frac{1}{2}(m\omega^2)x^2,
\end{equation}
whose $n_{th}$ eigenstate $\phi_n(x)$ satisfies the eigen-equation:
\beqn
-\frac{\hbar^2}{2m}\frac{d^2\phi_n}{dx^2}+\frac{1}{2}(m\omega^2)x^2\phi_n^2 = E_n\phi_n.
\eeqn
which can be transformed into the standard \emph{Sturm-Liouville form}, with dimensionless parameters $\alpha\equiv\sqrt{\hbar/(m\omega)}$, $\epsilon\equiv E/(\hbar\omega/2)$ and $u\equiv x/\alpha$:
\beqn
\frac{d}{du}[p(u)\frac{d\phi_n(u)}{du}] + \left(q(u) + \epsilon\omega(u)\right)\phi_n(u) = 0,\label{sturm-liouville}
\eeqn
where
\beqn
p(u) = 1,\quad q(u)=-u^2,\quad \omega(u)=1.
\eeqn

The normalizable solutions of Eqn.~(\ref{sturm-liouville}) are given by,
\beqn
\phi_n(u) = N_n \mathcal{H}_n(u)e^{-\frac{1}{2}u^2},\label{harmonic-os-eq}
\eeqn
where $N_n$=$(\frac{m\omega}{\pi\hbar})^{\frac{1}{4}}(2^nn!)^{-\frac{1}{2}}$ is the normalization factor and $\mathcal{H}_n$ is the $n_{th}$ Hermite polynomial. Therefore we see that for the harmonic oscillator, the number of zeros of the $n$-th excited eigenstate is $n$, and the parity of the $n_{th}$ eigenstate $\psi_n$ alternates as $(-1)^n$. Such oscillatory behavior is a generic feature for Sturm-Liouville type differential equations Eqn.~(\ref{sturm-liouville}) on the interval where $p(u)$ and $\omega(u)$ are positive \cite{ODE_book}.

We have observed a similar alternating parity and increasing number of zeros of eigenstates at higher magic angles in the chiral model, as discussed in Section~\ref{sec:exact_inversion_sym} and Section~\ref{nodesofzeromode}. The problem of the chiral twisted bilayer graphene model is more difficult. One reason is that it is a two-variable differential equation. To make progress, we utilize the symmetry of the problem to reduce the problem to one variable.

We starting by reviewing the zero mode equation, and see how symmetry helps reduce the dimension of the problem. We first recall the zero mode equation from Eqn.~(\ref{zeromodewavefunction}) and Eqn.~(\ref{AshvinHamiltonian}):
\beqn
-i\bar\partial(i\mG(\bm r)\Phi_{\bm k}(\bm r)) = -\eta\alpha U_{\phi}(\bm r)\mG(-\bm r)\Phi_{\bm k}(\bm r).\label{recallzeromodeeqn}
\eeqn
By using the lowest Landau level condition that the quantum Hall wavefunction $\Phi_{\bm k}$ satisfies,
\beqn
\bar\partial\Phi_{\bm k} = -\frac{z}{2}\Phi_{\bm k},\label{LLLcondition}
\eeqn
we arrive at the zero mode equation that the function $\mG(\bm r)$ must satisfy:
\beqn
(\bar\partial - \frac{z}{2}) \mG(\bm r) + \eta \alpha U_\phi(\bm r) \mG(-\bm r)=0,\label{36}
\eeqn
which is subject to the boundary condition Eqn.~(\ref{boundary-condition-G}).

We note that due to the mirror symmetry $\mM_y$ of the problem, both $\mG(x,y)$ and $\mG^*(x,-y)$ are zero mode solutions of Eqn.~(\ref{36}). By utilizing the global $U(1)$ phase degree of freedom of wavefunction, one can always choose $\mG(\bm r=\bm 0)$ to be a purely real number, thereby constraining $\mG(x,0)$ to be a real function. We have already used this property for $\rho(\bm r)$ in Section~\ref{nodesofzeromode}, and plotted its real and imaginary part on the reflection symmetric line in FIG.~\ref{plotaline}.

Here we denote the real and imaginary parts of $\mG(x,0)$ as $\mathcal{R}(x)$ and $\mathcal{I}(x)$ respectively. Although the imaginary part vanishes identically at $y$=$0$, its $y-$direction derivative $(\p_y\mathcal{I})(x)$$\equiv$$\partial_y\mathcal{I}(x,y)|_{y=0}$ does not. We end up with the following:
\beqn
\mathcal{I}(x) = 0,\quad\mathcal{R}(x)\neq0,\quad(\partial_y\mathcal{I})(x)\neq0.
\eeqn

The zero mode equation Eqn.~(\ref{36}) is now rewritten as:
\begin{equation}
\partial_x \mathcal{R}(x) - (\partial_y \mathcal{I})(x) - \frac{x}{2}\mathcal{R}(x) + \eta \alpha U_{\phi}(x) \mathcal{R}(-x) = 0,\label{37}
\end{equation}
subject to the boundary condition Eqn.~(\ref{boundary-condition-G}) which, when reduced to the $y$=$0$ line, becomes:
\beqn
\mathcal{R}(x+\sqrt3a) &=& -\mathcal{R}(x),\label{bc}\\
(\p_y\mathcal{I})(x+\sqrt3a) &=& -(\p_y\mathcal{I})(x) + \frac{\sqrt3a}{2} \mathcal{R}(x),\nonumber
\eeqn
where $a$ is the length of the \mr primitive lattice vectors. In the unit $S$=$1$ we have been using, its value is $a^2$=$4\pi/\sqrt3$.

The derivation so far is exact. The differential equation Eqn.~(\ref{37}) and its boundary conditions Eqn.~(\ref{bc}) contain the full information of the nodes in the problem. The difficulty of solving Eqn.~(\ref{37}) is that it is a two-variable differential equation. To make progress, we now do approximation on $(\p_y\mathcal{I})$ to eliminate one variable.

It is interesting to observe that $-\frac{x}{2}\mathcal{R}(x)$ satisfies the same boundary condition as $(\partial_y \mathcal{I})(x)$. In the following, we will approximate:
\beqn
(\p_x\mathcal{I})(x) \approx -\frac{x}{2}\mathcal{R}(x).\label{assump}
\eeqn
Under this assumption, the differential equation simplifies dramatically, and becomes a one-variable ordinary differential equation:
\begin{eqnarray}
\frac{d}{dx} \mathcal{R}(x)+ \eta \alpha U_{\phi}(x) \mathcal{R}(-x) = 0,\label{38}
\end{eqnarray}
which can be rewritten into a second order form:
\begin{eqnarray}
-\frac{d}{dx}\left(\frac{1}{U_\phi(x)}\frac{d\mathcal{R}(x)}{dx}\right) &=& \alpha^2 U_{\phi}(-x)\mathcal{R}(x).\nonumber\\
\mathcal{R}(x+\sqrt3a) &=& -\mathcal{R}(x).\label{39}
\end{eqnarray}

Hence we have brought the zero mode equation on the reflection symmetric line into the Sturm-Liouville form Eqn.~(\ref{sturm-liouville}) under the approximation shown in Eqn.~(\ref{assump}).

Suppose we have two solutions $\mathcal{R}_{1,2}$ of Eqn.~(\ref{39}), which corresponds to two magic angles $\alpha_{1,2,}$ with $\alpha_1<\alpha_2$. From Eqn.~(\ref{39}), we deduce that,
\beqn
[U^{-1}_{\phi}(\mathcal{R}_1\mathcal{R'}_2-\mathcal{R'}_1\mathcal{R}_2)]' = (\alpha_1^2-\alpha_2^2)U_{\phi}(-x)\mathcal{R}_1\mathcal{R}_2,\nonumber
\eeqn
where we have implicitly suppressed the argument $x$ in $U^{-1}_{\phi}$, $\mathcal{R}_{12}$ and their derivatives. Now, consider a region spanned $[x_a,x_b]$. The integration of the above equation in this region yields:
\begin{widetext}
\beqn
[U^{-1}_{\phi}(x)(\mathcal{R}_1(x)\mathcal{R'}_2(x)-\mathcal{R'}_1(x)\mathcal{R}_2(x))]|^{x_b}_{x_a} = (\alpha_1^2-\alpha_2^2)\int_{x_a}^{x_b}d\zeta U_{\phi}(-\zeta)\mathcal{R}_1(\zeta)\mathcal{R}_2(\zeta).\label{integralform}
\eeqn
\end{widetext}

It then follows from the theory of differential equations \cite{ODE_book}, in the parameter region $x\in[x_a,x_b]$ that $U_{\phi}(\pm x)>0$ or $U_{\phi}(\pm x)<0$, the nodes of two consecutive solutions must oscillate; otherwise it leads to contradiction with Eqn.~(\ref{integralform}). We emphasize that our argument is based on the assumption Eqn.~(\ref{assump}), and we can only argue for the node oscillation in the regions where $U_{\phi}(\pm x)$ are both positive or negative. This argument shows that in general, there should be more zeros at higher magic angles.

\bibliography{TBG0.bib}
\end{document}